\newtheorem{theorem}{Theorem}
\newtheorem{sectiontheorem}{Theorem}[section]
\renewcommand{\thesectiontheorem}{M\arabic{sectiontheorem}}
\newtheorem{sectionlemma}{Lemma}[section]
\renewcommand{\thesectionlemma}{M\arabic{sectionlemma}}
\newtheorem{sectiondefinition}{Definition}[section]
\renewcommand{\thesectiondefinition}{M\arabic{sectiondefinition}}
\newtheorem{sectionremark}{Remark}[section]
\renewcommand{\thesectionremark}{M\arabic{sectionremark}}
\newtheorem{sectioncorollary}{Corollary}[section]
\renewcommand{\thesectioncorollary}{M\arabic{sectioncorollary}}
\renewcommand{\thesectionexample}{M\arabic{sectionexample}}
\renewcommand{\v}[1]{\boldsymbol{#1}}
\newcommand{\redsection}[1]{%
  \section{\color{red}#1}%
  \begingroup\color{red}%
  \everypar{\global\color{red}\aftergroup\resetcolor}%
}
\def\resetcolor{\color{black}}
\renewcommand\NAT@biblabelnum[1]{#1.}
\newcommand{\beginsupplement}{%
  \setcounter{section}{0}%
  \renewcommand{\thesection}{S\arabic{section}}%

  \setcounter{equation}{0}%
  \renewcommand{\theequation}{S\arabic{equation}}%
  \setcounter{figure}{0}%
  \renewcommand{\thefigure}{S\arabic{figure}}%
  \setcounter{table}{0}%
  \renewcommand{\thetable}{S\arabic{table}}%

  \setcounter{theorem}{0}%
  \renewcommand{\thetheorem}{S\arabic{theorem}}%
  \setcounter{definition}{0}%
  \renewcommand{\thedefinition}{S\arabic{definition}}%
  \setcounter{remark}{0}%
  \renewcommand{\theremark}{S\arabic{remark}}%
  \setcounter{fact}{0}%
  \renewcommand{\thefact}{S\arabic{fact}}%

  \setcounter{sectiontheorem}{0}%
  \renewcommand{\thesectiontheorem}{S\arabic{sectiontheorem}}%
  \setcounter{sectionlemma}{0}%
  \renewcommand{\thesectionlemma}{S\arabic{sectionlemma}}%
  \setcounter{sectiondefinition}{0}%
  \renewcommand{\thesectiondefinition}{S\arabic{sectiondefinition}}%
  \setcounter{sectionremark}{0}%
  \renewcommand{\thesectionremark}{S\arabic{sectionremark}}%
  \setcounter{sectioncorollary}{0}%
  \renewcommand{\thesectioncorollary}{S\arabic{sectioncorollary}}%
  \setcounter{sectionexample}{0}%
  \renewcommand{\thesectionexample}{S\arabic{sectionexample}}%
}
\begin{document}

\title{Exact Universal Characterization of Chiral-Symmetric Higher-Order Topological Phases}

\author{Jia-Zheng Li}
\thanks{These authors contributed equally.}
\affiliation{School of Physics and Technology, Wuhan University, Wuhan 430072, China}
\affiliation{Department of Physics and Center for Theory of Quantum Matter, University of Colorado, Boulder CO 80309, USA}
\author{Xun-Jiang Luo}
\thanks{These authors contributed equally.}
\affiliation{School of Physics and Technology, Wuhan University, Wuhan 430072, China}
\affiliation{Department of Physics, Hong Kong University of Science and Technology, Clear Water Bay, Hong Kong, China}
\author{Fengcheng Wu}
\email{wufcheng@whu.edu.cn}
\affiliation{School of Physics and Technology, Wuhan University, Wuhan 430072, China}
\affiliation{Wuhan Institute of Quantum Technology, Wuhan 430206, China}
\author{Meng Xiao}
\email{phmxiao@whu.edu.cn}
\affiliation{School of Physics and Technology, Wuhan University, Wuhan 430072, China}
\affiliation{Wuhan Institute of Quantum Technology, Wuhan 430206, China}

\begin{abstract}
Utilizing Bott index vectors formulated through a series of polynomials of position operators under open boundary conditions, we establish a universal, rigorous, and complete correspondence between the Bott index vector and topological zero-energy corner states in systems with chiral symmetry. Our framework covers systems of arbitrary shapes, including topological phases that are beyond the characterization by previously proposed invariants such as multipole moments or multipole chiral numbers. A key feature of our approach is its ability to capture the real-space patterns of zero-energy corner states, providing a deeper understanding of higher-order topological phases. We provide a rigorous analytical proof of its higher-order correspondence and sum rules for Bott index vectors under different boundary conditions. To demonstrate the effectiveness of our theory, we examine several model systems with representative patterns of zero-energy corner states that lie outside the scope of previous classification frameworks.

\end{abstract}

\maketitle

\noindent{\large{\bf{Introduction}}}\\
The exploration of higher-order topological phases (HOTPs), encompassing insulators~\cite{QuantizedElectricMultipole2017benalcazar,EnsuremathDimensionalEdge2017song,ReflectionSymmetricSecondOrderTopological2017langbehn,ElectricMultipoleMoments2017benalcazar,HigherorderTopologicalInsulators2018vanmiert,HigherorderTopologicalInsulators2018schindler,HigherorderTopologicalInsulators2018khalaf,SecondorderTopologicalInsulators2018geier,QuantizationFractionalCorner2019benalcazar,HigherOrderTopologyAxion2019xu,SymmetryenforcedChiralHinge2019yue,BoundaryobstructedTopologicalPhases2021khalaf}, semimetals~\cite{TopologicalQuadrupolarSemimetals2018lina,HigherOrderWeylSemimetals2020ghorashi,StrongFragileTopological2020wieder,UniversalHigherorderBulkboundary2022lenggenhager,HigherOrderTopologyMonopole2019wanga}, and superconductors~\cite{HigherorderTopologicalInsulators2018khalaf,SecondorderTopologicalInsulators2018geier,HigherOrderTopologicalOddParity2019yan,WeakpairingHigherOrder2018wang,VortexSurfacePhase2020ghorashi,PhasetunableSecondorderTopological2019franca,MajoranaCornerModes2018dwivedi,MajoranaCornerStates2018liu,PhysRevB.104.104510}, extends beyond the realm of first-order topological phases~\cite{ClassificationTopologicalQuantum2016chiu}, suggesting a generalized bulk-boundary correspondence~\cite{HigherOrderBulkBoundaryCorrespondence2019trifunovic,HigherOrderTopologicalBand2021trifunovic}. According to this correspondence, HOTPs can be classified into intrinsic phases protected by crystalline symmetry and extrinsic phases that depend on boundary properties~\cite{SecondorderTopologicalInsulators2018geier,HigherOrderBulkBoundaryCorrespondence2019trifunovic}. Higher-order multipoles~\cite{QuantizedElectricMultipole2017benalcazar,ElectricMultipoleMoments2017benalcazar,HigherorderTopologicalInsulators2018schindler,PhysRevB.108.235150}, such as the quadrupole $q_{xy}$, have been introduced to characterize the corner states in higher-order topological insulators (HOTIs). A significant advancement is the introduction of the multipole chiral number~\cite{ChiralSymmetricHigherOrderTopological2022benalcazar}, aimed at capturing the higher-order topology of chiral-symmetric systems. Despite these pioneering developments, the theoretical understanding of HOTPs, particularly in chiral-symmetric systems, confronts several persistent challenges. Firstly, establishing a \textbf{rigorous correspondence} between proposed topological invariants (e.g., the quadrupole moment or multipole chiral number~\footnote{Please see Supplemental Information, Note 5.}) and the actual presence of HOTPs remains an open problem, with recent studies highlighting inconsistencies and limitations~\cite{QuadrupoleInsulatorCorner2023tao,PhysRevLett.132.213801,PhysRevB.107.045118,TypeIIQuadrupoleTopological2020yang,10.21468/SciPostPhys.17.2.060,PhysRevResearch.2.043012}. Secondly, while HOTPs are often distinguished as intrinsic (crystalline symmetry protected) or extrinsic (boundary dependent)~\cite{SecondorderTopologicalInsulators2018geier,HigherOrderBulkBoundaryCorrespondence2019trifunovic}, a \textbf{unified characterization} that encompasses both types for systems of arbitrary geometry is yet to be developed. Thirdly, complex systems can exhibit diverse `patterns of corner states'. For instance, consider a scenario where a topological Benalcazar-Bernevig-Hughes model~\cite{QuantizedElectricMultipole2017benalcazar,QuantizedElectricMultipole2017benalcazar} is coupled with a chiral-symmetric system possessing zero-energy corner states (ZECSs) located diagonally in a square system~\cite{SecondorderTopologicalInsulators2018geier,PhysRevLett.124.166804} while preserving chiral symmetry. The combined system exhibits varying numbers of ZECSs localized at different corners. We refer to this variability in the distribution of corner states as the ``pattern of corner states", as illustrated in Fig.~\ref{fig1_configuration_sum}(a). Unlike the well-established classification for first-order topological states~\cite{ClassificationTopologicalQuantum2016chiu}, a comprehensive theoretical framework for \textbf{capturing these intricate patterns} of ZECSs in HOTPs is still lacking.

In this work, we directly address these three longstanding challenges by establishing a universal, rigorous, and complete correspondence that elucidates the spatial topological structure of HOTPs in systems with chiral symmetry.
Our approach is founded upon an analytically proven correspondence, detailed in Eq.~\eqref{eqs:correspondence}, which utilizes a Bott index vector defined through polynomials of position operators under open boundary conditions (OBC).
This framework provides a universal method for characterizing chiral-symmetry-protected HOTPs including its spatial topological information across arbitrary dimensions.
Crucially, its independence from crystalline symmetry allows for the characterization of both intrinsic and extrinsic HOTPs in systems of any shape featuring $m$ corners (where $m\in\mathbb{N}^{\ast}$). 
This universal, rigorous, and complete correspondence between Bott indices and HOTPs in non-interacting chiral-symmetric systems significantly advances their theoretical understanding by resolving fundamental issues highlighted in prior research~\cite{ExactHigherorderBulkboundary2021jung,QuadrupoleInsulatorCorner2023tao,PhysRevLett.132.213801,DifficultiesOperatorbasedFormulation2019ono,TypeIIQuadrupoleTopological2020yang}. Moreover, our framework introduces a sum rule that enables the precise differentiation of topological information across various boundary states. Beyond these theoretical contributions, our work paves the way for practical applications in the design and prediction of HOTPs, holding potential implications for condensed matter physics~\cite{shumiya2022evidence,aggarwal2021evidence}, photonics~\cite{wu2023higher,ota2019photonic}, and acoustics~\cite{PhysRevLett.122.244301,PhysRevLett.129.125502}.

\medskip
\noindent{\bf{Bott index and polynomials of position operators}}\\
We consider a Hamiltonian $H$ on an $n$-dimensional lattice with chiral symmetry, $\Pi H \Pi^{-1}=-H$, where $\Pi$ is a traceless unitary matrix representing the chiral operator (the case of chiral symmetry with non-zero trace is discussed in the Supplemental Information, Note 1~\cite{dai2024topologicalclassificationchiralsymmetry}). We take the eigenbasis of $\Pi$ and rewrite the Hamiltonian,
\begin{equation}
\label{eqs:chiralhami}
    H=\left( \begin{matrix}
	0&		h\\
	h^{\dagger}&		0\\
\end{matrix} \right).
\end{equation}
We denote $q=U_AU_B^{\dagger}$ based on the singular value decomposition of $h$ with size $N$: $h=U_A\Sigma U_B^{\dagger}$, where $U_{\alpha}=\left[\Psi_{\alpha,1},\dots,\Psi_{\alpha,N}\right],\text{ } \alpha\in\{\text{A,B}\}$, and A (B) labels the eigen-spaces of chiral operator $\Pi$ with eigenvalue $+1$ ($-1$). $\Sigma$ is a diagonal matrix with nonnegative elements. The eigenstates of $H$ can be written as $|\Psi_{i}^{\pm}\rangle=(1/\sqrt{2})[\Psi_{A,i},\pm\Psi_{B,i}]^{T}$. Our focus is on insulators, meaning that energy gaps energy gaps are required for all states except those that are localized at corners.

For $n=1$, the problem returns to characterizing the first-order topological phase in the AIII class, which has been extensively studied utilizing the winding number~\cite{ClassificationTopologicalQuantum2016chiu} and the Bott index~\cite{RealspaceRepresentationWinding2021lin}. We focus on the Bott index~\cite{DisorderedTopologicalInsulators2011loring,GuideBottIndex2019loring}.
\begin{equation}
\label{eqs:clasc_bott}
\begin{aligned}
    \nu_{(\text{1D})}&=\operatorname{Bott}\left(e^{2\pi i \frac{X}{L}},q\right),\\
    \operatorname{Bott}\left(e^{2\pi i \frac{X}{L}},q\right)&:=\frac{1}{2\pi i}\operatorname{Tr}\operatorname{log}\left(e^{2\pi i \frac{X}{L}}qe^{-2\pi i \frac{X}{L}}q^{\dagger}\right),
\end{aligned}
\end{equation}
where $X$ is the $x$-direction position operator, $L$ is the system length, $\operatorname{log}$ is the principal logarithm. Considering $q$ as a representation of the states and ${X}/{L}$ as a polynomial that functions as a topology-measuring ruler, a critical question arises: Is it possible to formulate measurement polynomials to universally and exactly capture the $n$-th order topological phases for an arbitrary $n$-dimensional Hamiltonian with chiral symmetry?

\begin{figure}[t]
	\includegraphics[width=1.0\columnwidth]{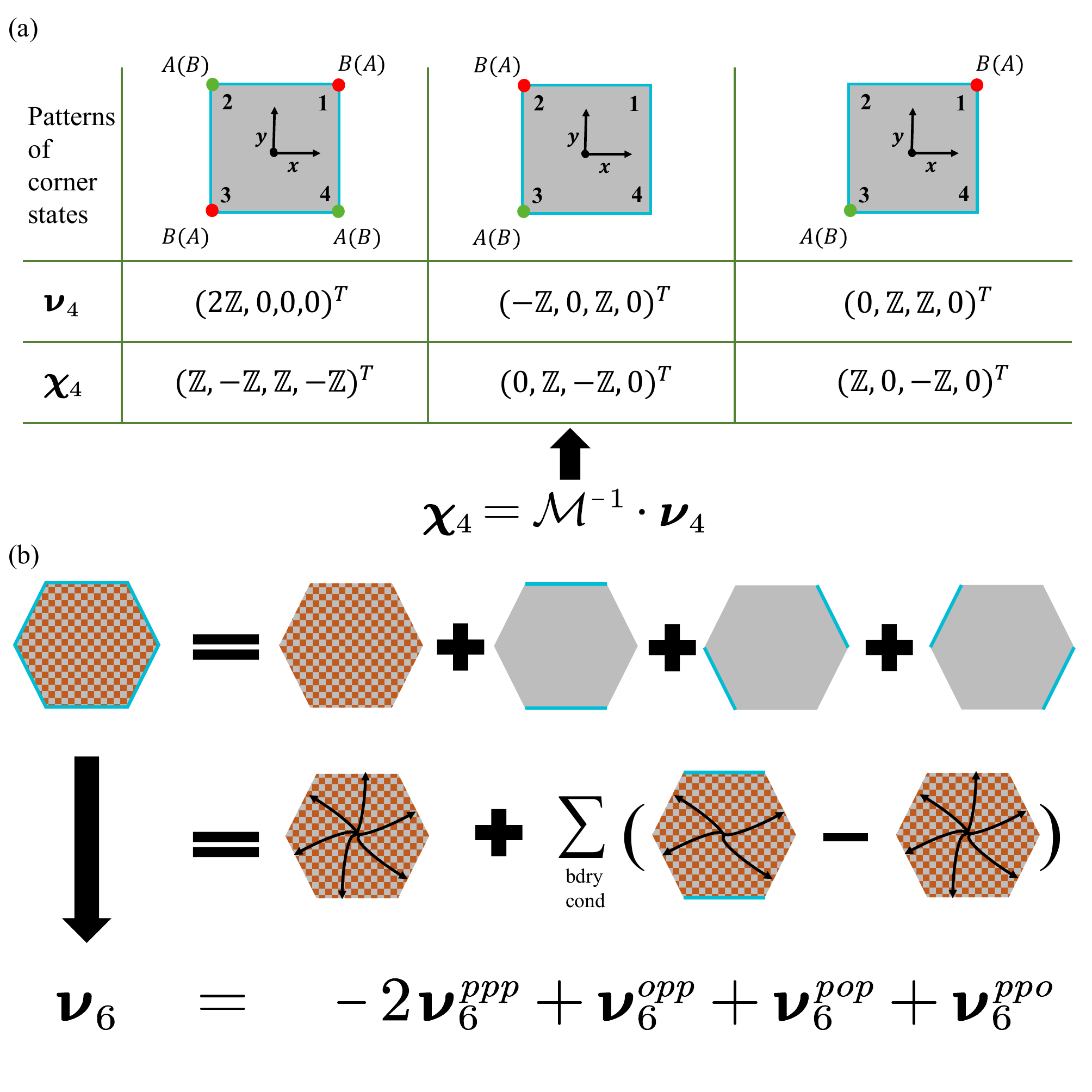}
	\caption{(a) Illustrations of three distinct corner state patterns in a square system, each characterized by a linearly independent configuration vector. The four-component Bott index vector, $\v{\nu}_4$, as defined in Eq.~\eqref{eqs:correspondence}, uniquely classifies all possible corner state configurations, with $\mathcal{M}$ given by Eq.~\eqref{eqs:M_square}.
(b) The top row shows that under open boundary conditions, the system's states consist of both bulk and boundary states. As shown in the middle row, these states can be represented as linear combinations of states from various mixed boundary conditions, where boundaries connected by black lines with double arrows are glued together. Consequently, the Bott index vector $\v{\nu}_6$ for the fully open system can be decomposed in a corresponding manner, as depicted in the bottom row. The bulk and boundary states are schematically represented by orange checker-board and cyan lines, respectively.}
	\label{fig1_configuration_sum}
\end{figure}

\medskip

\noindent{\large{\bf{Results}}}

\bigskip

\noindent{\bf{Bott index vector}}\\
Here, we aim to answer this question by proposing a general scheme to construct a Bott index vector, whose components are Bott indices $\nu$ that involves the polynomial $f$ of position operators and the polynomial $g$ of system length $L$,
\begin{equation*}
\begin{aligned}
    \hat{M}&=e^{2\pi i \frac{f(X,Y,Z,\dots)}{g(L)}},\\
    \nu&=\operatorname{Bott}\left(\hat{M},q\right).
\end{aligned}
\end{equation*}
For clarity, in the main text we assume that the system shape is regular (see the Supplemental information, Note 2 for the discussion of systems with non-regular shapes~\cite{hormann2006mean,floater2015generalized}). 

In chiral-symmetric systems, $n$-th order topological phases can manifest in various distinct patterns, as Fig.~\ref{fig1_configuration_sum}(a) illustrated. To provide a comprehensive description of these phases, we introduce and define the following configuration vector of corner states for system geometries possessing $m$ corners. 
\begin{equation}
\label{eqs:patterns}
    \boldsymbol{\chi}_{m}=\left(N_1^{(B)}-N_1^{(A)},\dots,N_m^{(B)}-N_m^{(A)}\right)^\mathrm{T},
\end{equation}
where $N_i^{(A)}$ and $N_i^{(B)}$ denote the number of corner states localized in the $i$-th corner and with $+$ and $-$ chirality, respectively. Given the zero trace of the chiral operator, it follows that $\sum_{i=1}^{m}{\chi}_{m}^{(i)}=0$, where ${\chi}_{m}^{(i)}$ denotes the $i$-th component of $\boldsymbol{\chi}_{m}$. 

We now introduce an universal, analytical and complete correspondence between HOTPs described by the configuration vector $\boldsymbol{\chi}_{m}$ and the topological invariants (the Bott index vector) $\v{\nu}_m$, a critical and long-standing goal in higher-order topological physics field~\cite{ExactHigherorderBulkboundary2021jung,DifficultiesOperatorbasedFormulation2019ono,TypeIIQuadrupoleTopological2020yang,HigherorderTopologicalPhases2023yanga,WannierTopologyQuadrupole2023yang}. Specifically, in the following we prove
\begin{equation}
\begin{aligned}
\label{eqs:correspondence}
  \boldsymbol{\chi}_{m}=&\mathcal{M}^{-1}\cdot \v{\nu}_{m},\\
    \v{\nu}_{m}=&\left(\nu^{(1)}_m,\dots,\nu^{(m-1)}_m,0\right)^{\mathrm{T}},
\end{aligned}
\end{equation}
where $\mathcal{M}$ is a matrix determined by $m-1$ polynomials in position operators, and the vector component $\nu^{(i)}_{m}$ represents the Bott index defined by these polynomials.

It is important to emphasize that we focus on Hamiltonians with open boundary conditions, rather than the periodic boundary conditions commonly assumed~\cite{ChiralSymmetricHigherOrderTopological2022benalcazar,ManybodyElectricMultipole2019wheeler,ManybodyOrderParameters2019kanga}. This approach ensures that $f(X,Y,Z,\dots)/g(L)$ is well-defined, which circumvents the difficulties inherent in operator-based formulations within periodic systems~\cite{DifficultiesOperatorbasedFormulation2019ono}. Moreover, it avoids issues such as the possibility of the ill-defined Bott index in cases like the multipole chiral number~\footnote{see supplemental information, Note 5.}. Additionally, we emphasize that the Bott index we define captures the topology of bulk and boundary states, thereby characterizing topologically stable ZECSs rather than directly counting corner states, as demonstrated in Eq.~\eqref{eqs:no_c_corner} and Eq.~\eqref{eqs:contri_corner}. However, adopting OBC introduces certain challenges: the $q$ is not unique when zero-energy states are present. Nevertheless, Bott index vector will remain unchanged, provided that $f$ and $g$ are properly constructed as Eq.~\eqref{eqs:Rule_1} (see details in supplemental information, Note 3.).

\medskip
\noindent\textbf{Bott index-ZECSs correspondence}\\
We begin by introducing the following theorem. 
\begin{theorem}
\label{thm:nozero}
    If no zero-energy corner states exist in a system with chiral symmetry, a finite coupling range, and an energy gap, then the following conclusion holds true in the limit of large system size $L\to \infty$.
\begin{equation*}
\begin{aligned}
&-1\notin \sigma(\hat{M}q \hat{M}^{\dagger}q^{\dagger}),\\
&\nu=\operatorname{Bott}\left(\hat{M},q\right)=0,
\end{aligned}
\end{equation*}
where $f$ and $g$ in $\hat{M}$ can be arbitrary polynomials with the same degree, i.e., $\operatorname{deg}(f)=\operatorname{deg}(g)$. Here, $\sigma$ denotes the set of eigenvalues of a matrix.
\end{theorem}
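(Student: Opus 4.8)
The plan is to establish the stronger fact that the unitary $\hat{M}q\hat{M}^{\dagger}q^{\dagger}$ lies within operator norm $O(1/L)$ of the identity, and then to upgrade this to the exact statement $\nu=0$ by a continuity-plus-integrality (homotopy) argument rather than by estimating $\operatorname{Tr}\log$ head-on. Two structural inputs feed the estimate. First, since the system is gapped with no zero-energy corner states, every singular value of $h$ is $\ge\Delta>0$, so $q=U_{A}U_{B}^{\dagger}=h(h^{\dagger}h)^{-1/2}$ is genuinely unitary. Second, because $H$ --- hence $h$ and $h^{\dagger}h$ --- has finite coupling range $R$ and all eigenvalues of $h^{\dagger}h$ lie in $[\Delta^{2},\|h\|^{2}]$, the operator $q$ is exponentially localized: approximating $x\mapsto x^{-1/2}$ on $[\Delta^{2},\|h\|^{2}]$ to accuracy $e^{-cp}$ by polynomials of degree $p$ (Chebyshev), using that a degree-$p$ polynomial in $h^{\dagger}h$ has range $\le 2Rp$, and optimizing in $p$ yields $|[(h^{\dagger}h)^{-1/2}]_{ij}|\le Ce^{-|r_{i}-r_{j}|/\xi}$; convolving with the range-$R$ matrix $h$ gives $|q_{ij}|\le C'e^{-|r_{i}-r_{j}|/\xi}$ (a Combes--Thomas-type bound). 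Making this localization estimate rigorous with explicit constants is the one genuinely technical ingredient, though it is standard.

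Next I would exploit that $\hat{M}=e^{2\pi i f(X,Y,\dots)/g(L)}$ is diagonal in the position basis with entries $e^{2\pi i\phi_{i}}$, $\phi_{i}=f(r_{i})/g(L)$, and that the balance $\deg f=\deg g=d$ makes $\phi$ slowly varying in space: since the lattice sites satisfy $|r_{i}|=O(L)$, the mean-value inequality gives $|f(r_{i})-f(r_{j})|\le C_{f}L^{d-1}|r_{i}-r_{j}|$ while $|g(L)|=\Theta(L^{d})$, hence $|\phi_{i}-\phi_{j}|\le (C''/L)|r_{i}-r_{j}|$. The entries of $\hat{M}q\hat{M}^{\dagger}-q$ are then $(e^{2\pi i(\phi_{i}-\phi_{j})}-1)q_{ij}$, bounded by $(2\pi C''/L)|r_{i}-r_{j}|\,C'e^{-|r_{i}-r_{j}|/\xi}$; since $\sum_{r}|r|e^{-|r|/\xi}$ converges on the lattice, Schur's test gives $\|\hat{M}q\hat{M}^{\dagger}-q\|\le K/L$ with $K$ depending only on $R,\Delta,n,f,g$. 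Because $q$ is unitary, $\|\hat{M}q\hat{M}^{\dagger}q^{\dagger}-I\|\le K/L\to 0$; for large $L$ this is $<2$, and for a unitary $\|U-I\|<2$ is equivalent to $-1\notin\sigma(U)$, which is the first claim.

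The main obstacle is the second claim. One cannot conclude $\nu=0$ by bounding $\operatorname{Tr}\log(\hat{M}q\hat{M}^{\dagger}q^{\dagger})$ through its eigenphases: there are $\sim L^{n}$ of them, each only $O(1/L)$, so the crude bound $O(L^{n-1})$ is useless for $n\ge2$. I would instead scale the phase: set $\hat{M}_{t}=e^{2\pi i t f/g(L)}$, $t\in[0,1]$. The preceding estimate holds verbatim with the harmless extra factor $t\le1$, so $\|\hat{M}_{t}q\hat{M}_{t}^{\dagger}q^{\dagger}-I\|\le K/L<2$ uniformly in $t$ once $L>K/2$; hence $-1\notin\sigma(\hat{M}_{t}q\hat{M}_{t}^{\dagger}q^{\dagger})$ for every $t$ (strengthening the first claim to the whole path), the principal logarithm is norm-continuous along it, and $t\mapsto\frac{1}{2\pi i}\operatorname{Tr}\log(\hat{M}_{t}q\hat{M}_{t}^{\dagger}q^{\dagger})$ is continuous. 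As $\det(\hat{M}_{t}q\hat{M}_{t}^{\dagger}q^{\dagger})=1$, this function is integer-valued, hence constant on $[0,1]$; at $t=0$ it equals $\frac{1}{2\pi i}\operatorname{Tr}\log I=0$. Therefore $\nu=\operatorname{Bott}(\hat{M},q)=0$, and everything after the localization estimate is soft.
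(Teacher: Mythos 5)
Your proposal is correct, and its concluding step coincides with the paper's: both rescale the phase, $\hat{M}_t=e^{2\pi i t f/g(L)}$, keep the commutator (equivalently $\|\hat{M}_t q\hat{M}_t^{\dagger}q^{\dagger}-\mathbbm{1}\|$) uniformly below $2$ along the path, and invoke continuity plus integer-valuedness of $\frac{1}{2\pi i}\operatorname{Tr}\log$ to force $\nu=0$ (the paper packages this as homotopy invariance of the Bott index, Theorem~M1 and Corollary~M1). Where you genuinely diverge is the central estimate. The paper never localizes $q$ itself: it bounds $\|[\hat{M},H]\|\le\mathcal{O}((R/L)\|H\|)$ via the Holmgren bound and the slow variation of $f/g$ (Theorem~M2), and then transfers this to the Fermi projector $P$ --- whose off-diagonal block is $-q/2$ in the chiral basis --- through the resolvent identity $[A,(H-z)^{-1}]=(H-z)^{-1}[H,A](H-z)^{-1}$ and a contour integral around the occupied spectrum, which is where the gap $\Delta E$ enters, yielding $\|[\hat{M},q]\|\le\mathcal{O}((R/L)\|H\|/\Delta E)$ purely at the level of operator norms. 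You instead prove a stronger intermediate statement: exponential decay of the matrix elements of $q=h(h^{\dagger}h)^{-1/2}$ via Chebyshev approximation of $x^{-1/2}$ on the gapped spectral interval (a Combes--Thomas/banded-matrix-function decay bound), and then combine it with the Lipschitz bound $|\phi_i-\phi_j|\le (C''/L)|r_i-r_j|$ and a Schur/Holmgren test to get $\|\hat{M}q\hat{M}^{\dagger}q^{\dagger}-\mathbbm{1}\|\le K/L$. Your route is technically heavier (the localization lemma must be made rigorous, as you note) but buys a reusable quantitative statement about $q$; the paper's contour-integral route is shorter, needs only norm estimates, and makes the dependence on $R$, $\|H\|$, and $\Delta E$ explicit. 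Both give the same $\mathcal{O}(1/L)$ scaling and hence both claims of the theorem.
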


The proof of this theorem is available in Methods. Theorem~\ref{thm:nozero} establishes that a non-zero Bott index ($\nu\neq0$) under these conditions necessitates the presence of ZECSs. Furthermore, in the following we establish an analytical relationship between the Bott index $\nu$ and the configuration vector $\v{\chi}_{m}$.
\begin{equation} 
\label{eqs:bott_vector}
\nu=\sum_{i=1}^{m}\frac{\operatorname{sgn}\left(f(\boldsymbol{x}_{i}^{(c)})/g(L)\right)}{2}\chi_{m}^{(i)}\in\mathbb{Z}, 
\end{equation}
where $\boldsymbol{x}_{i}^{(c)}$ denotes the position vector of the $i$-th corner, and polynomials $f$ and $g$ satisfy the following equation
\begin{equation}
   \label{eqs:Rule_1}
    f(\boldsymbol{x}^{(c)}_{i})=\pm g(L)/2.
\end{equation}

The proof proceeds in steps. First, we decompose the Bott index based on the localization of eigenstates [\Cref{eqs:decomp,eqs:u_dec,eqs:diagonal,eqs:up_deco}].
We rewrite $\nu$ as 
\begin{equation}
\begin{aligned}
\label{eqs:decomp}
       \nu &=\frac{1}{2\pi i}\operatorname{Tr}\operatorname{log}\left(U_A^{\dagger}\hat{M}U_AU_B^{\dagger}\hat{M}^{\dagger}U_B\right)\\
        &=\frac{1}{2\pi} \sum_j \lambda_j,   
\end{aligned}
\end{equation}
where $e^{i\lambda_j}$, $\lambda_j\in(-\pi,\pi)$, is the eigenvalue of $U_A^{\dagger}\hat{M}U_AU_B^{\dagger}\hat{M}^{\dagger}U_B$~\cite{GuideBottIndex2019loring}.
We now consider $U_{\alpha}^{\dagger}\hat{M}U_{\alpha}$, where $\alpha \in \{A, B\}$. In the position basis, $U_{\alpha}$ can be written as:
\begin{equation}
\label{eqs:u_dec}
U_{\alpha}=\left[\Psi_{\alpha,\text{bulk}}\quad\Psi_{\alpha,\text{bdry}}\quad\Psi_{\alpha,\text{corner}}\right],
\end{equation}
where $\Psi_{\alpha,\text{bulk (bdry,corner)}}$ denotes the $\alpha$ component of wavefunctions extended in the bulk (boundary, corner) of the system. Given that the overlap of wavefunctions between states extended in different areas of the system decays as the system length increases and the localization property of $\hat{M}$, we obtain the following result.
\begin{equation}
\label{eqs:diagonal}
       \langle \Psi _{\alpha,\beta_1}\mid \hat{M}\mid \Psi _{\alpha,\beta_2}\rangle \stackrel{L\to \infty}{ \longrightarrow }0,\quad  \beta_1\neq \beta_2,
\end{equation}
for  $\beta_1, \beta_2 \in \{\text{bulk},\text{bdry},\text{corner}\}$. The detailed derivation is available in Methods.
Therefore, $U_{\alpha}^{\dagger}\hat{M}U_{\alpha}$ becomes a block diagonal matrix as $L\rightarrow \infty$,
\begin{equation}
\label{eqs:up_deco}
        U_{\alpha}^{\dagger}\hat{M}U_{\alpha}=\bigoplus_{\beta\in \{\text{bulk},\text{bdry},\text{corner}\}}U_{\alpha,\beta}^{\dagger}\hat{M}U_{\alpha,\beta},
\end{equation}
where $U_{\alpha,\beta}$ are matrices composed of the $\alpha$ component of eigenstates $\Psi_\beta$ reside in $\beta$.
Similarly, $U_{A}^{\dagger}\hat{M}U_{A} U_{B}^{\dagger}\hat{M}^{\dagger}U_{B}$ also becomes a block diagonal matrix.

Second, we examine the contribution of corner states to the Bott index.
We demonstrate that this contribution is identically zero [Equations \eqref{eqs:corner_as_eigen} to \eqref{eqs:contri_corner}].
This arises because, in the thermodynamic limit ($L\to \infty$), corner states $\Psi_{\alpha,\text{corner}}$ become eigenstates of the normalized position operator $\boldsymbol{X}/L$ with corresponding eigenvalues $\boldsymbol{x}_{\alpha}^{(c)}/L$, 
\begin{equation}
\label{eqs:corner_as_eigen}
    \frac{\boldsymbol{X}}{L}\Psi_{\alpha,\text{corner}}=\frac{\boldsymbol{x}_{\alpha}^{(c)}}{L}\Psi_{\alpha,\text{corner}},
\end{equation}
where $\boldsymbol{x}_{\alpha}^{(c)}$ denotes the position vector of the corner $\alpha$ where the state is localized. (A detailed derivation based on the localization property of corner states is provided in Methods section).

This eigenstate property, in conjunction with Eq.~\eqref{eqs:Rule_1}, implies that for a corner state $\Psi_{\alpha,\text{corner}}$
\begin{equation}
    \hat{M} \Psi_{\alpha,\text{corner}}=e^{2\pi i f(\v{x}_{\alpha}^{(c)})/g(L)} \Psi_{\alpha,\text{corner}}=-\Psi_{\alpha,\text{corner}}.
\end{equation}
Consequently, the operator product relevant to the Bott index calculation for these corner states simplifies to the identity matrix
\begin{equation}
\label{eqs:no_c_corner}
U_{A,\text{corner}}^{\dagger}\hat{M}U_{A,\text{corner}}U_{B,\text{corner}}^{\dagger}\hat{M}^{\dagger}U_{B,\text{corner}}=\mathbbm{1}.
\end{equation}
Given that the matrix argument of the logarithm in the Bott index formula is block-diagonal with respect to bulk, boundary, and corner states, the total Bott index can be expressed as a sum of their individual contributions
\begin{equation}
\label{eqs:no_contri_corner}
    \operatorname{Bott}\left(\hat{M},q\right)=\frac{1}{2\pi i}\sum_{\beta}\operatorname{Tr}\operatorname{log}\left(U_{A,\beta}^{\dagger}\hat{M}U_{A,\beta}U_{B,\beta}^{\dagger}\hat{M}^{\dagger}U_{B,\beta}\right).
\end{equation}
Utilizing Eq.~\eqref{eqs:no_c_corner}, the specific contribution from the corner states is therefore \begin{equation}
\label{eqs:contri_corner}
     \frac{1}{2\pi i}\operatorname{Tr}\operatorname{log}\left(\mathbbm{1}\right) = 0.
\end{equation}
This vanishing contribution from corner states signifies that any non-zero value of the Bott index $\nu$ necessarily arises from the topological properties encoded within the bulk and boundary states of the system.

Third, we consider the evolution of the remaining bulk and boundary contributions using a parameter $s$ [Equations \eqref{eqs:relation_noncorner_corner} to \eqref{eqs:final_eq}]. 

We define two functions of $s$ for $s\in[0,1]$ with $\hat{M}(s)=\operatorname{Exp}[2\pi i \frac{f(X,Y,Z,\dots)\times s}{g(L)}]$ and $\nu(s)=\operatorname{Bott}\left(\hat{M}(s),q\right)$. Thus, we have $\nu(1)=\nu$ and $\nu(s)=1/2\pi\sum_j\lambda_j(s)$. For a specific pattern of ZECSs $\v{\chi}_{m}$, we have
\begin{equation}
\begin{aligned}
\label{eqs:relation_noncorner_corner}
    &\det\left(U_{A,\text{corner}}^{\dagger}\hat{M}(s)U_{A,\text{corner}}U_{B,\text{corner}}^{\dagger}\hat{M}(s)^{\dagger}U_{B,\text{corner}}\right) \\&=e^{i\sum_{j\in\{\text{corner}\}}\lambda_j(s)}\\
    &=\det\left(\operatorname{diag}(e^{2s \pi i \frac{f\left(\v{x}_{A,i}^{(c)}\right)}{g(L)} })\operatorname{diag}(e^{-2s \pi i \frac{f\left(\v{x}_{B,i}^{(c)}\right)}{g(L)} })\right)\\
    &=e^{-i s\pi  \sum_{i}^{m}\operatorname{sgn}\left(f(\boldsymbol{x}_{i}^{(c)})/g(L)\right)\chi_{m}^{(i)}},    
\end{aligned}
\end{equation}
where in the last step we utilize Eq.~\eqref{eqs:corner_as_eigen} and `$\{\text{corner}\}$' refers to the set of indices labeling the eigenvalues of the corner block.

Given that $1=\det (\hat{M}(s)q(\hat{M}(s))^{\dagger}q^{\dagger}) = e^{i\sum_j \lambda_j(s)}$, we have
\begin{equation}
\label{eqs:corner_others}
    e^{i\sum_{j \notin \{\text{corner}\}}\lambda_j(s)}=e^{-i\sum_{j \in \{\text{corner}\}}\lambda_j(s)}.
\end{equation}
For $\lambda_{j \notin \{\text{corner}\}}(s)$, in the limit $L\to \infty$, they do not encounter the branch cut of logarithm for $s\in[0,1]$. This conclusion can be reached by applying Theorem~\ref{thm:nozero} to a effective Hamiltonian $\tilde{H}$, which is composed of boundary and bulk states and features both a spectral gap and a finite coupling range. The finite coupling range of $\tilde{H}$ is inherited from the finite coupling of $H$, as evidenced by the representation $H=\sum_{j\notin \{\text{corner}\}}E_j |\Psi_j\rangle \langle \Psi_j |$ and $\tilde{H}=R\cdot H \cdot R^{T}$, where $R$ is a rectangular matrix that projects onto the Hilbert space spanned by boundary and bulk states.
With Eq.~\eqref{eqs:relation_noncorner_corner} and Eq.~\eqref{eqs:corner_others}, we have 
\begin{equation}
    \partial_s\sum_{j\notin \{\text{corner}\}}\lambda_j(s) = \pi  \sum_{i}^{m}\operatorname{sgn}\left(f(\boldsymbol{x}_{i}^{(c)})/g(L)\right)\chi_{m}^{(i)}.
\end{equation}
Since $\lambda_{j}(0)=0$, after integrating both sides, we have $\sum_{j\notin \{\text{corner}\}}\lambda_{j}(s)= s\pi  \sum_{i}^{m}\operatorname{sgn}\left(f(\boldsymbol{x}_{i}^{(c)})/g(L)\right)\chi_{m}^{(i)}$. When $s=1$, we obtain
\begin{equation}
\begin{aligned}
\label{eqs:final_eq}
       \nu=\nu(1)&=\frac{1}{2\pi } \sum_j \lambda_j(1)=\frac{1}{2\pi } \sum_{j \notin\{\text{corner}\}} \lambda_{j}(1)\\&=\sum_{i=1}^{m}\frac{\operatorname{sgn}\left(f(\boldsymbol{x}_{i}^{(c)})/g(L)\right)}{2}\chi_{m}^{(i)},
\end{aligned}
\end{equation}
where in the first row we use the result obtained in Eq.~\eqref{eqs:no_c_corner}, which states that $\lambda_j(1)=0$ for all $j\in\{\text{corner}\}$.

\medskip
\noindent\textbf{Full characterization of systems with arbitrary shape}\\
Noticing that Equation~\eqref{eqs:final_eq} is a linear equation and $\sum_{i=1}^{m}\chi_{m}^{(i)}=0$, we are able to establish a one to one map for the configuration vector of HOTPs and an $m$-component Bott index vector $\v{\nu}_{m}$. The first $m-1$ components of $\v{\nu}_{m}$, $\nu^{(i)}_{m}$, are the Bott indices defined using $m-1$ distinct sets of polynomials, $f_{m}^{(i)}$ and $g_{m}^{(i)}$, which satisfy the condition in Eq.~\eqref{eqs:Rule_1}. The $m$-th component of $\v{\nu}_{m}$ is set to zero, as a consequence of $\sum_{i=1}^{m}\chi_{m}^{(i)}=0$. Next, define the $m\times m$ matrix $\mathcal{M}$ with $\mathcal{M}_{ij}=\operatorname{sgn}\left(f_{m}^{(i)}(\boldsymbol{x}_{j}^{c})/g_{m}^{(i)}(L)\right)/2$ for $1\le i \le m-1$ and $\mathcal{M}_{mj}=1/2$ (where $j$ labels the corners). By choosing the polynomials such that $\det \mathcal{M} \neq 0$, we establish an analytical correspondence, namely Eq.~\eqref{eqs:correspondence}.
\begin{equation*}
     \boldsymbol{\chi}_{m}=\mathcal{M}^{-1}\cdot \left(\nu^{(1)}_m,\dots,\nu^{(m-1)}_m,0\right)^{\mathrm{T}}.
\end{equation*}
By implementing this equation, we are able to characterize all $n$-th order chiral symmetry protected topological phases for systems with $m$ corners.

Taking systems with the square shape as an example, we define the Bott index vector $\v{\nu}_4$ with each component corresponding to $f/g$ being $2XY/L^2$, $X/L$, and $Y/L$, respectively, where $L$ is the side length of the square. As illustrated in Fig.~\ref{fig1_configuration_sum}(a), we characterize patterns using Eq.~\eqref{eqs:correspondence}, where
\begin{equation}
\label{eqs:M_square}
\mathcal{M}=\frac{1}{2}\begin{pmatrix}
    1 & -1 & 1 &-1 \\
    1 & -1 & -1 &1 \\
    1 & 1 & -1 &-1 \\
    1 & 1 & 1 &1
\end{pmatrix}.
\end{equation}
For systems shaped as regular hexagons, 
we define the Bott index vector $\v{\nu}_6$ with the corresponding $f/g$,
\begin{equation}
\label{eqs:polyc6}
\begin{aligned}
     \{&(X^3-\frac{XY^2}{3}+\frac{8Y^3}{3\sqrt{3}})/2L^3, (X^2-\frac{4XY}{\sqrt{3}}-\frac{Y^2}{3})/2L^2,\\ &(X^2+\frac{4XY}{\sqrt{3}}-\frac{Y^2}{3})/2L^2,(X^3-3XY^2)/2L^3,\\ &(X^3+\frac{7XY^2}{3})/2L^3\},
\end{aligned}
\end{equation}
where $L$ is again the side length of the hexagon. 

Those polynomials arise from the nontrivial representations of the cyclic group corresponding to the rotational symmetries of a regular polygon. Together with the trivial polynomials $1/2$, they form a faithful representation of the cyclic group $\mathrm{C}_{m}$. In the Method section, we provide an algorithm for generating them with the corresponding $\mathcal{M}\neq 0$. For reference, in Supplemental information, Note 4, we provide a series of polynomials and their associated matrices $\mathcal{M}$ for systems shaped as $l$-sided regular polygons, where $l \in \{4, 5, 6, 8, 12\}$. Systems with non-regular shapes are discussed in Supplemental information, Note 2.

\medskip
\noindent\textbf{Sum rules under different boundary conditions}\\
To connect our OBC framework with conventional PBC approaches and understand the origin of the Bott index vector, we examine how $\v{\nu}_m$ decomposes under different boundary conditions. We analyze the behavior of Bott index vector by considering $U_{\alpha}$ under varying boundary conditions, denoted as $U_{\alpha}^{s_1 \dots s_k \dots s_y}$. Here, the superscript indicates the boundary condition for each of the $y$ independent boundaries, which can be either periodic ($s_k = \text{p}$) or open ($s_k = \text{o}$). Our result demonstrates that the Bott index vector $\v{\nu}_m$, calculated for a system with all open boundaries, can be expressed as a sum of Bott index vectors obtained with partially periodic boundary conditions. The partial periodic boundary conditions here are designed to account for how the boundaries are ``connected" or ``glued" together, rather than simply being periodic in the conventional sense (see Methods for details on boundary conditions).

In the thermodynamic limit, as $L \to \infty$, the unitary matrices $U_{\alpha}^{s_1 \dots s_k \dots s_y}$ exhibit a hierarchical structure. Systems with exclusively periodic boundaries comprise only bulk states. Conversely, each introduced open boundary contributes an additional, distinct set of boundary-localized states.
\begin{equation}
\begin{aligned}
\label{eqs: u_structure}
    U_{\alpha}^{p\dots p}&=\left[\Psi_{\alpha,\text{bulk}}\right], \\
    U_{\alpha}^{op\dots p} &=\left[\Psi_{\alpha,\text{bulk}}\quad\Psi_{\alpha,\text{bdry\textunderscore 1}}\right],\\
    U_{\alpha}^{po\dots p} &=\left[\Psi_{\alpha,\text{bulk}}\quad\Psi_{\alpha,\text{bdry\textunderscore 2}}\right], \\
    \vdots & \\
    U_{\alpha}^{o\dots o} &=\left[\Psi_{\alpha,\text{bulk}}\quad\Psi_{\alpha,\text{bdry\textunderscore 1}} \cdots \right],
\end{aligned}
\end{equation}

This hierarchical arrangement of states implies an additive composition for the Bott index vector. With prior observations, Eq.~\eqref{eqs:no_contri_corner}, the Bott index for a system with one open boundary (e.g., along the first direction) can be expressed as the sum of the bulk contribution (all boundaries periodic, $\nu_m^{(i),p\dots p}$) and the contribution from states localized at that specific boundary $\nu_{m,\text{bdry\textunderscore 1}}^{(i)}$.
\begin{equation}
        \nu_{m}^{(i),op\dots p} =\nu_m^{(i),p\dots p}+\nu_{m,\text{bdry\textunderscore 1}}^{(i)},
\end{equation}
Analogous expressions can be derived for other mixed boundary configurations.

This framework allows for the systematic isolation of Bott index contributions stemming from boundary features of different dimensionalities. We define $\nu^{(i)}_{m,\text{j-bd}}$ (for $j\le d-1$) as the net contribution to Bott index from all $(n-j)$-dimensional boundary states ($j=0$ represents bulk states). This term can be extracted via a recursive procedure. The sum of Bott indices over all $\binom{y}{j}$ (the number of combination) configurations with exactly $j$ open boundaries incorporates contributions from all boundary features of dimension $(n-l)$ where $l \le j$. To isolate $\nu^{(i)}_{m,\text{j-bd}}$, the contributions from higher-dimensional boundaries ($l < j$) must be subtracted. A specific $(n-l)$-dimensional boundary feature is overcounted, appearing in all configurations where its $l$ defining boundaries are open, plus any $j-l$ of the remaining $y-l$ boundaries. The overcounting factor is, therefore, $\binom{y-l}{j-l}$. This logic yields the recursive relation:
\begin{equation}
\begin{aligned}
        \nu^{(i)}_{m,\text{0-bd}} &\coloneqq \nu_{m}^{(i),p\dots p},\\
        \nu^{(i)}_{m,\text{1-bd}} &\coloneqq \sum_{k\in S_{1,y}} \nu_{m}^{(i),k}-\binom  {y} {1}  \nu^{(i)}_{m,\text{0-bd}},\\
        & \vdots \\
        \nu^{(i)}_{m,\text{j-bd}} &\coloneqq \sum_{k\in S_{j,y}} \nu_{m}^{(i),k}-\sum_{l=0}^{j-1}\binom {y-l} {j-l}  \nu^{(i)}_{m,l\text{-bd}}.
\end{aligned}
\end{equation}
$S_{j,y}$ is the set of all distinct permutations, where each permutation $k$ is formed from a sequence containing $j$ `o' elements and $y-j$ `p' elements.

\begin{figure}[t]
    \centering
    \includegraphics[width=1.0\columnwidth]{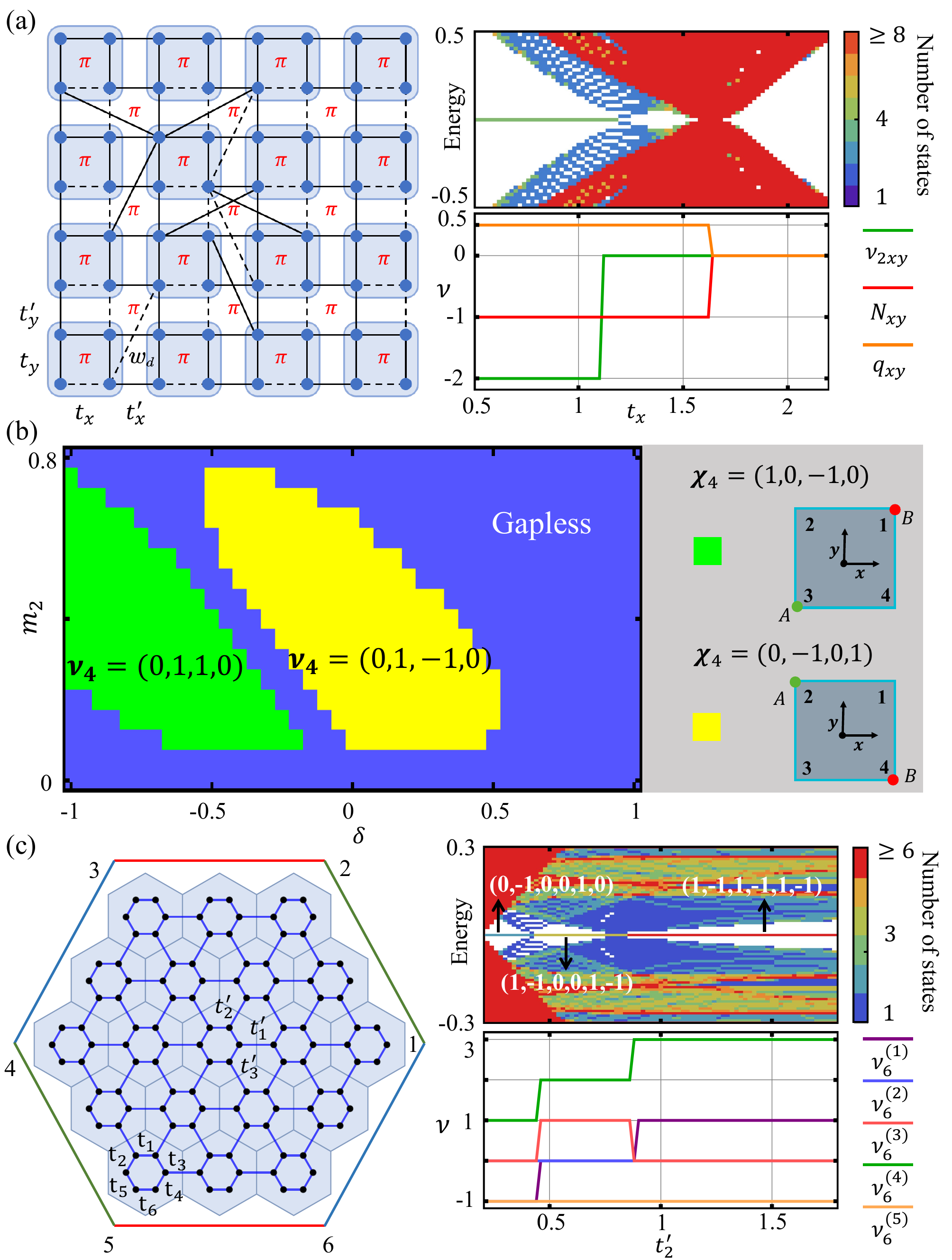}
    \caption{(a) Left panel: Schematic of the model in Eq.~\eqref{eqs:hami_without}. For clarity, we do not show all non-nearest-neighbor hoppings. The dashed lines represent an additional phase factor of $-1$ such that the system possesses the staggered $\pi$ flux. Right panel: Density of states for this system and corresponding $\nu_{2xy}$, $q_{xy}$ and $N_{xy}$ as functions of $t_x$ with $t_y=0.1$, $t_x^{\prime}=t_y^{\prime}$, $w_d=0.8$. $\nu_x$ and $\nu_y$ always equal $0$. The system length $L=50$. (b) Left panel: Phase diagram of the model governed by Eq.~\eqref{eqs:hami_mirr} as a function of $\delta$ and $m_2$ with $m_1=1$. The topological invariant $\v{\nu}_4$ is used to identify three distinct phases, colored green, yellow, and blue. Right panel: Illustration of how the green and yellow phases correspond to different patterns of zero-energy corner states (ZECSs) and their associated configuration vectors, $\v{\chi}_4$. (c) Left panel: Schematic of the model in Eq.~\eqref{eqs:hami_c6}. Right panel: The upper part shows the density of states for this system and the configuration vectors $\v{\chi}_{6}$ for three higher-order topological phases. The lower part displays the corresponding Bott indices $\nu^{(1,\cdots,5)}_6$ as functions of $t_2^{\prime}$. We fix $t_1=1/10$, $t_2=1/3$, $t_3=1/4$, $t_4=3/10$, $t_5=1/4$, $t_6=1/2$, $t_1^{\prime}=t_3^{\prime}=1$ and the system length $L=30$.}
    \label{fig:eff_mir_c6}
\end{figure}

Solving this recursive system yields a general expression for $\nu^{(i)}_{m,\text{j-bd}}$ in terms of the Bott index calculated for various mixed boundary conditions, we have
\begin{equation}
    \nu^{(i)}_{m,\text{j-bd}} = \sum_{l=0}^{j}(-1)^{j-l} \binom {y-l} {j-l}\sum_{k\in S_{l,y}} \nu_{m}^{(i),k}.
\end{equation}

This decomposition culminates in a general sum rule for the Bott index vector $\v{\nu}_m$. By relating the index to the individual boundary contributions, $\nu^{(i)}_m=\sum_{j=0}^{n-1} \nu^{(i)}_{m,\text{j-bd}}$, we arrive at the expression.
\begin{equation}
\label{eqs:sum_rule}
    \v{\nu}_m=\sum_{l=0}^{n-1}(-1)^{n-1-l} \binom {y-1-l} {n-1-l}\sum_{k\in S_{l,y}} \v{\nu}_{m}^{k}.
\end{equation}

The general expression in Eq.~\eqref{eqs:sum_rule} is best understood through specific examples. For one-dimensional systems, this simplifies to $\nu=\nu^p$, consistent with Eq.~\eqref{eqs:clasc_bott}. In two dimensions, we obtain
\begin{align}
    \v{\nu}_4 &=-\v{\nu}_4^{pp}+\v{\nu}^{op}_4+\v{\nu}^{po}_4, \label{eqs:sum_rules_c4}\\
        \v{\nu}_6 &=-2\v{\nu}_6^{ppp}+\v{\nu}^{opp}_6+\v{\nu}^{pop}_6+\v{\nu}^{ppo}_6.\label{eqs:sum_rules_c6}
\end{align}
A schematic derivation for the hexagonal case, as described by Eq.~\eqref{eqs:sum_rules_c6}, is illustrated in Fig.~\ref{fig1_configuration_sum}(b), which shows how the simple decomposition of states under open boundary conditions results in the sum rule.
While  Eq.~\eqref{eqs:sum_rules_c4} for $\v{\nu}_4$ of the rectangle shape systems bears formal resemblance to the multipole decompositions discussed in Ref.~\cite{ElectricMultipoleMoments2017benalcazar} (i.e., $Q^{\text{corner}} = -q_{xy} + p_x^{\text{edge}} + p_y^{\text{edge}}$), the physical interpretation differs fundamentally. Our sum rule decomposes a global topological index based on contributions from bulk and boundary states under varying boundary configurations; thus it depends on system shapes, whereas multipole approaches typically sum local moments or polarizations. This distinction is further clarified in three-dimensional systems, where our framework predicts that for a cube we have
\begin{equation}
\v{\nu}_8=\v{\nu}_8^{ppp}+\v{\nu}_8^{poo}+\v{\nu}_8^{opo}+\v{\nu}_8^{oop}-\v{\nu}_8^{opp}-\v{\nu}_8^{pop}-\v{\nu}_8^{ppo},
\end{equation}
which contrasts with earlier 3D multipole formulations. This decomposition emphasizes how our framework captures the interplay between interior and boundary properties, distinguishing it from prior multipole analyses.

\medskip
\noindent\textbf{Lattice Models}\\ We now provide three concrete examples to demonstrate the application of our framework. We note that all calculations of Bott index vectors are performed in real space. 

First, we examine a modified model, informed by two recent papers \cite{QuadrupoleInsulatorCorner2023tao,PhysRevLett.132.213801}, which incorporates additional diagonal long-range hoppings. These hoppings break separability while maintaining momentum-glide reflection symmetries~\cite{PhysRevLett.130.256601,chen2022brillouin} and chiral symmetry, as depicted in the left panel of Fig.~\ref{fig:eff_mir_c6}(a). The corresponding Bloch Hamiltonian at momentum $\v{k}$ can be written in the form of Eq.~\eqref{eqs:chiralhami} with
\begin{equation}
    \label{eqs:hami_without}
    \begin{aligned}
        h(\v{k})=&\left( \begin{matrix}
	t_x+t_x^{\prime} e^{-i k_x}&		t_y+t_y^{\prime} e^{-i k_y}\\
	t_y-t_y^{\prime} e^{i k_y}&		-t_x+t_x^{\prime} e^{i k_x}\\
\end{matrix} \right)\\&+w_d\left( \begin{matrix}
	i e^{-i k_x}\cos k_y&		i e^{-i k_y}\cos k_x\\
	i e^{i k_y}\cos k_x&		-i e^{i k_x}\cos k_y\\
\end{matrix} \right),    
    \end{aligned}
\end{equation}
where $t_{x(y)}$, $t_{x(y)}^{\prime}$, and $w_d$ are the nearest hoppings within unit cells, between unit cells, and along the diagonal directions, respectively.
We set $t_x^{\prime}=t_y^{\prime}=1$. This system hosts one ZECS at each corner, which cannot be accurately characterized by the quadrupole moment $q_{xy}$~\cite{ElectricMultipoleMoments2017benalcazar,TopologicalPhaseTransitions2020li,ManybodyElectricMultipole2019wheeler,ManybodyOrderParameters2019kanga} or the multipole chiral number $N_{xy}$ proposed in Ref.~\cite{ChiralSymmetricHigherOrderTopological2022benalcazar} (definitions provided in Supplemental information, Note 5), mirroring findings reported in the aforementioned two papers~\cite{QuadrupoleInsulatorCorner2023tao,PhysRevLett.132.213801}. However, the Bott index vector $\v{\nu}_4=(\nu_{2xy},\nu_x,\nu_y)$ is in precise agreement with the existence of ZECSs, unlike $q_{xy}$ and $N_{xy}$, as depicted by the density of states as a function of $t_x$ in Fig.~\ref{fig:eff_mir_c6}(a). Since the calculation shows that Bott indices $\nu_{x}$ and $\nu_{y}$ always equal zero, we only present $\nu_{2xy}$ as a function of $t_x$, which is consistent with the ZECS pattern [see Fig.~\ref{fig1_configuration_sum}(a)].

Second, we study another system with ZECSs located only at two diagonal corners of a square. The corresponding model has mirror-symmetry-protected corner states, as reported by Refs.~\cite{SecondorderTopologicalInsulators2018geier, PhysRevLett.124.166804}, and we add a mirror-symmetry-breaking $\delta$ term to it.
\begin{equation}
\label{eqs:hami_mirr}
\begin{aligned}
       h(\v{k})= &(\delta+m_2-i \sin k_y)\mathbbm{1}+(-\sin k_x+i m_2)\tau_z \\&+(-m_1-\cos k_x-\cos k_y)\tau_x,   
\end{aligned}
\end{equation}
where $\tau_{i}$ represents the Pauli matrix. When $\delta=0$, this system possesses diagonal and anti-diagonal mirror symmetries. The HOTPs in this system, which appear even without mirror symmetries ($\delta\neq0$), can be comprehensively characterized by the Bott index vector $\v{\nu}_4$. The phase diagram of $\v{\nu}_4$ as a function of $\delta$ and $m_2$ is shown in Fig.~\ref{fig:eff_mir_c6}(b), illustrating that two HOTPs, where two ZECSs are located at diagonal or anti-diagonal corners, are separated by gapless edge phases. The relationship between the Bott index vector $\v{\nu}_4$ of each phase and the ZECSs configuration vector $\v{\chi}_{4}$ is described by Eq.~\eqref{eqs:correspondence} with $\mathcal{M}$ in Eq.~\eqref{eqs:M_square}, as illustrated in the right panel of Fig.~\ref{fig:eff_mir_c6}(b). 

Finally, we study a lattice model with a regular hexagon shape, as shown in the left panel of Fig.~\ref{fig:eff_mir_c6}(c):
\begin{equation}
\label{eqs:hami_c6}
       h(\v{k})= \begin{pmatrix}
	t_1 &		t_2 & t_3^{\prime} e^{\frac{i(k_x-\sqrt{3}k_y)}{2}}\\
	t_1^{\prime} e^{-i k_x}&	t_3 &t_4\\
 t_5 & t_2^{\prime} e^{\frac{i(k_x+\sqrt{3}k_y)}{2}} &t_6
\end{pmatrix}.
\end{equation}
When $t_i=t$ and $t_j^{\prime}=t^{\prime}$ for $i\in\{1,\dots,6\},j\in\{1,2,3\}$, the system possesses $C_6$ symmetry with its ZECSs characterized by the $C_6$ topological indices~\cite{QuantizationFractionalCorner2019benalcazar}. However, in the absence of the $C_6$ and $C_2$ symmetries, the ZECSs still emerge, and HOTPs in this system exhibit a varying number of ZECSs. This is illustrated by the density of states as a function of $t_2^{\prime}$, as shown in the right panel of Fig.~\ref{fig:eff_mir_c6}(c). As $t_2^{\prime}$ increases from $0.3$ to $1.8$ (with other parameters specified in Fig.~\ref{fig:eff_mir_c6}), the system changes from a HOTP with two ZECSs to one with four ZECSs, and subsequently to one with six. Each phase transition is marked by the closure of the edge gap. In Fig.~\ref{fig:eff_mir_c6}(c), we also show the Bott index vector $\v{\nu}_6$ as functions of $t_2^{\prime}$, which fully characterizes each phase according to Eq.~\eqref{eqs:correspondence} with $\mathcal{M}$ provided in the Supplemental information, Note 4. Furthermore, $\v{\nu}_6$ was independently calculated using the sum rule stated in Eq.~\eqref{eqs:sum_rules_c6}, and the agreement between these two determinations of $\v{\nu}_6$ validates the sum rule Eq.~\eqref{eqs:sum_rule}.

\bigskip
\noindent{\large{\bf{Discussion}}}\\
In summary, we have established a comprehensive framework to universally characterize $n$-th order topological phases in $n$-dimensional chiral-symmetric systems, which can be intrinsic or extrinsic HOTPs depending on the presence of certain crystalline symmetries (for the discussion on crystalline symmetries and additional model demonstrations, see the Supplemental information, Note 6 and 7, respectively). First, we have established an exact correspondence between the Bott index and the HOTPs, addressing the longstanding challenges reported in earlier studies~\cite{QuadrupoleInsulatorCorner2023tao,PhysRevLett.132.213801,PhysRevB.107.045118,TypeIIQuadrupoleTopological2020yang,10.21468/SciPostPhys.17.2.060,PhysRevResearch.2.043012}. Second, by providing a general strategy to construct Bott index vectors, our framework applies universally to $n$-dimensional systems of arbitrary shapes. Furthermore, our framework enables the characterization of all possible patterns of HOTPs, revealing their spatial topological information using Bott index vectors. Even in situations where corner states shift from the geometric corners~\cite{zhang2020all}, our framework remains robust as long as the displacement is smaller than the order of $L$. We anticipate that our theory will find broad applications in the characterization and design of HOTPs, including, for example, topological superconductors with Majorana corner modes~\cite{luo2024spinbottindicestimereversalinvariant}, and topological polariton corner state lasing~\cite{wu2023higher}. Serving as both a theoretical foundation and a powerful practical tool, this framework opens new avenues for exploring and implementing higher-order topological phases.

\bigskip
\noindent \textbf{\large METHODS}\\
\noindent \textbf{Notations and Inequalities}\\
$\sigma(\cdot)$ denotes the set of eigenvalues of a matrix.

$\underset{S}{\text{sup}} P$ represents the supremum of the values taken by $P$ over a set $S$. 

$\parallel\cdot\parallel$ denotes the spectral norm of a matrix (the largest singular value of a matrix). This norm is induced by the Euclidean norm, $|\cdot|$, for vectors and is given by $\parallel A\parallel=\underset{x\neq0}{\text{sup}}\frac{|Ax|}{|x|}$, where $x$ is a vector.

$\operatorname{dist}(n,m)$ represents the Euclidean distance function in position space.

$\mathcal{O}$ denotes the order of approximation.

For the spectral norm, we have following two inequalities for two square matrices $A$ and $B$,
\begin{eqnarray*}
    \parallel A + B \parallel &\le \parallel A \parallel + \parallel B \parallel \\
    \parallel A  B \parallel &\le \parallel A \parallel  \parallel B \parallel.
\end{eqnarray*}

\bigskip
\noindent \textbf{Mathematical Structure of Bott Index}

\begin{sectiondefinition}
\label{def:Bott index}
Given two unitary matrices $U$ and $V$, such that ${-1}\notin \sigma(UVU^{\dagger}V^{\dagger})$ or equivalently such that $\parallel [U,V]\parallel <2$, their Bott index is defined as:
\begin{equation}
    \operatorname{Bott}(U,V):=\frac{1}{2\pi i} \operatorname{Tr} \operatorname{log}\left(UVU^{\dagger}V^{\dagger}\right).
\end{equation}
\end{sectiondefinition}
\begin{sectionremark}
\label{rem:equi}
    From 
    \begin{equation}
        \parallel UVU^{\dagger}V^{\dagger}-\mathbbm{1}\parallel=\parallel (UV-VU)U^{\dagger}V^{\dagger}\parallel=\parallel[U,V]\parallel,
    \end{equation}
    we deduce that ${-1}\in \sigma(UVU^{\dagger}V^{\dagger})$ if and only if $\parallel[U,V]\parallel=2$. Given the unitarity of $U$ and $V$, we have $\parallel[U,V]\parallel=\parallel (UV-VU)\parallel\le \parallel UV \parallel+\parallel VU\parallel=2$. Thus, since ${-1}\notin \sigma(UVU^{\dagger}V^{\dagger})$ is equivalent to $\parallel[U,V]\parallel\neq2$, we deduce that ${-1}\notin \sigma(UVU^{\dagger}V^{\dagger})$ is also equivalent to $\parallel[U,V]\parallel<2$.
\end{sectionremark}
The Bott index in Definition~\ref{def:Bott index} is an Integer. This can be obtained by the following~\cite{BottIndexTwo2021toniolo,BottIndexUnitary2022toniolo,GuideBottIndex2019loring}. From $\det (UVU^{\dagger}V^{\dagger})=1$, we have $1=\prod_j e^{i \theta_j}=e^{i\sum_j \theta_j}$, where $e^{i \theta_j}$ with $\theta_j \in (-\pi,\pi)$, is the eigenvalue of $UVU^{\dagger}V^{\dagger}$. It follows that $\operatorname{Bott}(U,V)=\frac{1}{2\pi i}\sum_j \operatorname{log} (e^{i \theta_j})=\frac{1}{2\pi}\sum_j\theta_j \in \mathbb{Z}$.
\begin{sectiontheorem}
    \label{thm: homoinvar}
    Given two continuous maps $V(s)$ : $ [0,1]\to \mathcal{U}(N)$ and $W(s)$ : $ [0,1]\to \mathcal{U}(N)$, where $\mathcal{U}(N)$ represents the unitary group, with $V(0)=V$, $W(0)=W$, such that $\parallel [V(s),W(s)]\parallel <2,\forall s\in [0,1]$, then
    \begin{equation}
        \label{eqs: homoinvar}
        \operatorname{Bott}\left(V(s),W(s)\right)=\operatorname{Bott}(V,W).
    \end{equation}
\end{sectiontheorem}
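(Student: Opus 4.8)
The plan is to prove that the Bott index is a homotopy invariant by showing the function $s \mapsto \operatorname{Bott}(V(s),W(s))$ is simultaneously integer-valued and continuous on the connected interval $[0,1]$, hence constant. The hypothesis $\|[V(s),W(s)]\| < 2$ is exactly what makes this work: by Remark~\ref{rem:equi} it is equivalent to $-1 \notin \sigma(V(s)W(s)V(s)^\dagger W(s)^\dagger)$, so the principal logarithm $\operatorname{log}$ is well-defined and analytic on a neighborhood of the spectrum of the unitary $M(s) := V(s)W(s)V(s)^\dagger W(s)^\dagger$ for every $s$, and $\operatorname{Bott}(V(s),W(s))$ is a genuine integer for each $s$ by the argument given just before the statement (from $\det M(s) = 1$).

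First I would set $M(s) = V(s)W(s)V(s)^\dagger W(s)^\dagger$ and observe that $s \mapsto M(s)$ is continuous $[0,1] \to \mathcal{U}(N)$, being built from the continuous maps $V(s), W(s)$ by multiplication and adjoint. Second, I would establish that $s \mapsto \operatorname{Bott}(V(s),W(s)) = \frac{1}{2\pi i}\operatorname{Tr}\operatorname{log} M(s)$ is continuous: the key point is that since $\|[V(s),W(s)]\| < 2$ for all $s$ in the compact set $[0,1]$, and this quantity depends continuously on $s$, there is a uniform bound $\|[V(s),W(s)]\| \le 2 - \delta$ for some $\delta > 0$; consequently the spectra of all the $M(s)$ avoid a fixed neighborhood of $-1$ on the unit circle, so the principal logarithm restricted to that arc is a fixed analytic (in particular continuous and bounded) function, and $\operatorname{log} M(s)$ depends continuously on $M(s)$ via holomorphic functional calculus (e.g.\ a Cauchy-integral representation with a contour that can be chosen once and for all). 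Taking the trace preserves continuity. Third, a continuous integer-valued function on the connected set $[0,1]$ is constant, so $\operatorname{Bott}(V(s),W(s)) = \operatorname{Bott}(V(0),W(0)) = \operatorname{Bott}(V,W)$, which is Eq.~\eqref{eqs: homoinvar}.

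I expect the main obstacle, such as it is, to be the continuity of $M(s) \mapsto \operatorname{log} M(s)$ (or equivalently of $\operatorname{Tr}\operatorname{log} M(s)$): one must be careful that the eigenvalues $e^{i\theta_j(s)}$ do not cross the branch cut at $-1$, which is precisely guaranteed by the uniform bound $\|[V(s),W(s)]\| \le 2-\delta$ obtained from compactness of $[0,1]$. The cleanest way to handle this rigorously is the holomorphic functional calculus: fix a contour $\Gamma$ in $\mathbb{C}$ encircling the arc of the unit circle where all spectra $\sigma(M(s))$ lie (which excludes a neighborhood of $-1$), write $\operatorname{log} M(s) = \frac{1}{2\pi i}\oint_\Gamma \operatorname{log}(z)\,(z - M(s))^{-1}\,dz$, and note that $z \mapsto (z - M(s))^{-1}$ is jointly continuous in $(s,z)$ on $\Gamma$ since $\operatorname{dist}(\Gamma, \sigma(M(s)))$ is bounded below uniformly in $s$. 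Everything else is elementary, so I would keep those parts brief.
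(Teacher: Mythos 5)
Your proof is correct. Note that the paper does not actually supply its own argument for this theorem --- it simply defers to the cited references (Toniolo's works on the Bott index) --- so there is nothing internal to compare against; your argument is essentially the standard one underlying those references. The two pillars are exactly right: (i) integrality of $\operatorname{Bott}(V(s),W(s))$ for each $s$, which follows from $\det\bigl(V(s)W(s)V(s)^\dagger W(s)^\dagger\bigr)=1$ as the paper notes just before the theorem, and (ii) continuity in $s$, which you secure correctly by using compactness of $[0,1]$ to upgrade the pointwise hypothesis $\parallel [V(s),W(s)]\parallel<2$ to a uniform bound $\le 2-\delta$; via the identity $\parallel M(s)-\mathbbm{1}\parallel=\parallel[V(s),W(s)]\parallel$ from Remark~\ref{rem:equi} this keeps $\sigma(M(s))$ uniformly away from the branch point $-1$, so a single fixed contour works for the holomorphic functional calculus and $s\mapsto\operatorname{Tr}\log M(s)$ is continuous. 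A continuous integer-valued function on the connected interval $[0,1]$ is constant, which gives Eq.~\eqref{eqs: homoinvar}. This is a complete, self-contained substitute for the citation.
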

\begin{proof}
    The proof of this theorem can be found in Refs.~\cite{BottIndexTwo2021toniolo,BottIndexUnitary2022toniolo}.
\end{proof}
As demonstrated by this theorem, Bott index is a topological invariant. A continuous transformation that keeps it well-defined cannot alter the Bott index. With this theorem, we have the following corollary.
\begin{sectioncorollary}
\label{coro: existence}
Consider two unitary matrices $V=e^{i A}$ and $W=e^{i B}$, where $A$ and $B$ are Hermitian matrices. If $\operatorname{Bott}(V,W)\neq0$, then there exists $s_0\in [0,1]$ such that $\parallel[e^{i A s_0},e^{i B}] \parallel=2$. 
\end{sectioncorollary}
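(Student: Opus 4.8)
The plan is to argue by contraposition: suppose that $\parallel[e^{iAs},e^{iB}]\parallel<2$ for \emph{all} $s\in[0,1]$, and show that this forces $\operatorname{Bott}(V,W)=0$. The natural tool is Theorem~\ref{thm: homoinvar}, the homotopy invariance of the Bott index, applied to the one-parameter family $V(s)=e^{iAs}$, $W(s)=e^{iB}$. First I would check the two hypotheses of that theorem: (i) both $s\mapsto e^{iAs}$ and the constant map $s\mapsto e^{iB}$ are continuous maps $[0,1]\to\mathcal{U}(N)$ — immediate from the continuity of the matrix exponential and unitarity of $e^{iAs}$, $e^{iB}$ for Hermitian $A$, $B$; and (ii) $\parallel[V(s),W(s)]\parallel<2$ for all $s\in[0,1]$, which is exactly our standing assumption. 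Theorem~\ref{thm: homoinvar} then yields $\operatorname{Bott}(e^{iAs},e^{iB})=\operatorname{Bott}(e^{iA\cdot 0},e^{iB})=\operatorname{Bott}(\mathbbm{1},e^{iB})$ for every $s$, and in particular $\operatorname{Bott}(V,W)=\operatorname{Bott}(\mathbbm{1},e^{iB})$.

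The remaining step is to evaluate the base point: $\operatorname{Bott}(\mathbbm{1},W)=0$ for any unitary $W$. This is a direct computation from Definition~\ref{def:Bott index}: the relevant product is $\mathbbm{1}\cdot W\cdot\mathbbm{1}^{\dagger}\cdot W^{\dagger}=\mathbbm{1}$, so $\parallel[\mathbbm{1},W]\parallel=0<2$ (the index is well-defined) and $\operatorname{Bott}(\mathbbm{1},W)=\frac{1}{2\pi i}\operatorname{Tr}\operatorname{log}(\mathbbm{1})=0$. Combining, $\operatorname{Bott}(V,W)=0$, contradicting the hypothesis $\operatorname{Bott}(V,W)\neq 0$. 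Hence there must exist some $s_0\in[0,1]$ with $\parallel[e^{iAs_0},e^{iB}]\parallel\ge 2$; by Remark~\ref{rem:equi} (or the elementary bound $\parallel[e^{iAs_0},e^{iB}]\parallel\le 2$ from unitarity) this inequality is in fact an equality, $\parallel[e^{iAs_0},e^{iB}]\parallel=2$, which is the claim.

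I do not anticipate a genuine obstacle here — the corollary is essentially a repackaging of homotopy invariance plus the triviality of the index at the identity. The only point requiring a little care is logical rather than technical: one must phrase the argument so that the failure of $\parallel[e^{iAs},e^{iB}]\parallel<2$ somewhere in $[0,1]$ is correctly extracted as the \emph{negation} of the hypothesis of Theorem~\ref{thm: homoinvar}, and then upgraded from ``$\ge 2$ for some $s_0$'' to ``$=2$ for some $s_0$'' using the universal upper bound $\parallel[U,V]\parallel\le\parallel UV\parallel+\parallel VU\parallel=2$ for unitaries established in Remark~\ref{rem:equi}. A compactness remark (the map $s\mapsto\parallel[e^{iAs},e^{iB}]\parallel$ is continuous on the compact interval $[0,1]$, so it attains its maximum) can be invoked if one wants the witnessing $s_0$ to be an actual point rather than a supremum, but it is not strictly needed for the stated conclusion.
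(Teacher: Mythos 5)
Your proposal is correct and follows essentially the same route as the paper: contraposition via the homotopy invariance of Theorem~\ref{thm: homoinvar} applied to $V(s)=e^{iAs}$ with the base point $\operatorname{Bott}(\mathbbm{1},e^{iB})=0$, then using the universal bound $\parallel[U,V]\parallel\le 2$ to turn the failure of strict inequality into an equality. Your write-up is slightly more explicit about the base-point evaluation and the $\ge 2$ versus $=2$ step, but the argument is the same.
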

\begin{proof}
    Consider the case where $\parallel[e^{i A s_0},e^{i B}] \parallel<2$ for all $s_0\in [0,1]$. Consequently, according to Theorem~\ref{thm: homoinvar}, $\operatorname{Bott}\left(e^{i A},e^{i B}\right)=\operatorname{Bott}\left(e^{i A 0},e^{i B}\right)=0$. However, since $\operatorname{Bott}\left(e^{i A},e^{i B}\right)\neq 0$, this indicates the existence of at least one value $s_0\in[0,1]$ for which $\parallel[e^{i A s_0},e^{i B}] \parallel=2$.
\end{proof}

Consider an $n$-dimensional Hamiltonian $H$ within a lattice of length $L$ and coupling range $R \ll L$, under OBC. This implies $H_{n,m}=0$ when $\operatorname{dist}(n,m)>R$ in the position basis. We construct position operators $X,Y,Z,\dots$ according to the dimension of this Hamiltonian. We have the following theorem for $H$ under OBC.
\begin{sectiontheorem}
    \label{thm: hamiNorm}
    Consider two polynomials $f(X,Y,Z,\dots)$ and $g(L)$ with $\operatorname{deg}(f)=\operatorname{deg}(g)$. Then, 
    \begin{equation}
        \parallel[e^{2\pi i \frac{f(X,Y,Z,\dots)}{g(L)}},H]\parallel\le \mathcal{O}\left(\frac{R}{L}\parallel H\parallel \right),
    \end{equation}
    for $H$ under open boundary conditions.
\end{sectiontheorem}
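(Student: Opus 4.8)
The plan is to reduce the commutator bound to a statement about how much the diagonal unitary $\hat M = e^{2\pi i f(X,Y,Z,\dots)/g(L)}$ fails to commute with a single hopping term, and then sum over the bounded coupling range. Write $H = \sum_{n,m} H_{n,m}\,|n\rangle\langle m|$ in the position basis, with $H_{n,m}=0$ whenever $\operatorname{dist}(n,m) > R$. Since $\hat M$ is diagonal in the position basis with entries $e^{2\pi i f(\v x_n)/g(L)}$, the matrix element of the commutator is
\begin{equation*}
\langle n|[\hat M, H]|m\rangle = \bigl(e^{2\pi i f(\v x_n)/g(L)} - e^{2\pi i f(\v x_m)/g(L)}\bigr) H_{n,m}.
\end{equation*}
So the commutator is the Hadamard (entrywise) product of $H$ with the matrix $\Delta_{n,m} := e^{2\pi i f(\v x_n)/g(L)} - e^{2\pi i f(\v x_m)/g(L)}$, which is supported on the same band $\operatorname{dist}(n,m)\le R$ as $H$. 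The first key step is to bound $|\Delta_{n,m}|$ pointwise. Using $|e^{i\alpha}-e^{i\beta}| \le |\alpha-\beta|$ and the fact that $f$ has degree equal to $\deg(g)$, a finite-difference/mean-value estimate on the polynomial $f$ over a displacement of length $\le R$ inside a box of side $L$ gives $|f(\v x_n)-f(\v x_m)| \le \mathcal{O}(R\, L^{\deg f - 1})$, hence $|\Delta_{n,m}| \le 2\pi |f(\v x_n)-f(\v x_m)|/g(L) \le \mathcal{O}(R/L)$ once $g(L) = \Theta(L^{\deg g})$. Here the hypothesis $\deg(f)=\deg(g)$ is exactly what makes the $L$-powers cancel and leaves one net factor of $1/L$.

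The second key step converts the pointwise bound on $\Delta$ into a spectral-norm bound on the Hadamard product $[\hat M,H] = \Delta \circ H$. One clean way: write $\Delta_{n,m} = e^{2\pi i f(\v x_n)/g(L)} \bigl(1 - e^{-2\pi i (f(\v x_n)-f(\v x_m))/g(L)}\bigr)$, so that $[\hat M, H] = \hat M\,(H - \hat M^\dagger H \hat M)$; since $\hat M$ is unitary, $\|[\hat M,H]\| = \|H - \hat M^\dagger H \hat M\|$. Now $H - \hat M^\dagger H \hat M$ is again banded with band width $R$, and its entries are $H_{n,m}(1 - e^{-2\pi i(f(\v x_n)-f(\v x_m))/g(L)})$, bounded in absolute value by $\|H\| \cdot \mathcal{O}(R/L)$ — wait, more carefully, by $|H_{n,m}|\cdot\mathcal{O}(R/L)$. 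For a banded matrix with band width $R$ and entries bounded by $\epsilon$ times the corresponding entries of $H$, the Schur test (or the standard estimate $\|A\| \le \sqrt{\|A\|_{1\to1}\|A\|_{\infty\to\infty}}$, with each induced $\ell^1$/$\ell^\infty$ norm being a row/column sum over at most $\mathcal{O}(R^{n})$ nonzero entries) yields $\|\Delta\circ H\| \le \mathcal{O}(R/L)\cdot(\text{number of band entries})\cdot\max|H_{n,m}|$. Since $R$ is treated as a fixed constant ($R\ll L$), the combinatorial factor $R^n$ is absorbed into $\mathcal{O}(\cdot)$, and $\max|H_{n,m}| \le \|H\|$, giving the claimed bound $\|[\hat M,H]\| \le \mathcal{O}\!\left(\tfrac{R}{L}\|H\|\right)$.

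The main obstacle is the polynomial finite-difference estimate in step one done uniformly and with the right power of $L$: one must check that for every monomial appearing in $f$, the difference across a displacement of Euclidean length $\le R$ with both endpoints in $[0,L]^n$ is $\le \mathcal{O}(R\,L^{\deg f-1})$, and that the implied constant depends only on $f$ (its coefficients and degree), not on $L$ or the particular corner. This is routine — expand $f(\v x_m) = f(\v x_n + \v\delta)$ with $|\v\delta|\le R$ by the multivariate Taylor formula, note each partial derivative $\partial_\mu f$ has degree $\le \deg f - 1$ and is therefore bounded by $\mathcal{O}(L^{\deg f - 1})$ on the box, and sum the $\mathcal{O}(R)$-order terms — but it is the one place where the hypothesis $\deg f = \deg g$ is used in an essential way, and where one must be careful that the normalized argument $f/g$ stays $\mathcal{O}(1)$ so that the linearization $|e^{i\alpha}-e^{i\beta}|\le|\alpha-\beta|$ is not wasteful. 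A secondary subtlety is that for non-regular shapes the position coordinates may not literally run over $[0,L]^n$, but they are still confined to a region of diameter $\mathcal{O}(L)$, which is all the estimate needs.
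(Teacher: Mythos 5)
Your proposal is correct and follows essentially the same route as the paper's proof: compute the commutator's matrix elements in the position basis, bound the phase difference $|e^{2\pi i f(\v{x}_m)/g(L)}-e^{2\pi i f(\v{x}_n)/g(L)}|$ by $\mathcal{O}(R/L)$ via a Taylor expansion of $f$ using $\operatorname{deg}(f)=\operatorname{deg}(g)$, and convert the entrywise banded bound into a spectral-norm bound by a row/column-sum (Schur-type) estimate — the paper uses the Holmgren bound, which is the same device as your $\sqrt{\|A\|_{1\to1}\|A\|_{\infty\to\infty}}$ test, with the $\mathcal{O}(R^n)$ band-counting factor likewise absorbed into the $\mathcal{O}(\cdot)$.
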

This proof is based on ideas from Ref.~\cite{BottIndexUnitary2022toniolo}.
\begin{proof}
     We use the Holmgren bound~\cite{FunctionalAnalysis2002lax,BottIndexUnitary2022toniolo} for the norm of a bounded operator $A$.
    \begin{equation}
        \parallel A\parallel\le \max \left(\sup_{m\in \mathbb{Z}^n}\sum_{n\in \mathbb{Z}^n} |\langle m|A|n\rangle|, m\leftrightarrow n\right),
    \end{equation}
    where $|n\rangle$ and $|m\rangle$ are eigenkets of position operator, and $m\leftrightarrow n$ denotes the exchange of the index $m$ and $n$ in the supremum and in the sum. A proof of this bound can be found in Ref.~\cite{FunctionalAnalysis2002lax} and Ref.~\cite{BottIndexUnitary2022toniolo}. Thus, we have
    \begin{equation}
    \begin{aligned}
                &\parallel [e^{2\pi i \frac{f(X,Y,Z,\dots)}{g(L)}},H]\parallel \\& \le \max \left(\sup_{m\in \mathbb{Z}^n}\sum_{n\in \mathbb{Z}^n} |\langle m|[e^{2\pi i \frac{f(X,Y,Z,\dots)}{g(L)}},H]|n\rangle|, m\leftrightarrow n\right).
    \end{aligned}
    \end{equation}
    Notice that
    \begin{equation}
        \langle m|[e^{2\pi i \frac{f(X,Y,Z,\dots)}{g(L)}},H]|n\rangle=(e^{2\pi i \frac{f(\boldsymbol{x}_m)}{g(L)}}-e^{2\pi i \frac{f(\boldsymbol{x}_n)}{g(L)}})\langle m|H|n\rangle,
    \end{equation}
    where $\boldsymbol{x}_m$ and $\boldsymbol{x}_n$ are the coordinate of $|m\rangle$ and $|n\rangle$, respectively. Therefore, 
\begin{align*}
&\|\bigl[e^{2\pi i \frac{f(X,Y,Z,\dots)}{g(L)}},H\bigr]\| \\
&\le \max_{m\leftrightarrow n} \Biggl(\sup_{m\in \mathbb{Z}^n}\sum_{\substack{\text{dist}(n,m)\le R}} \bigl|e^{2\pi i \frac{f(\boldsymbol{x}_m)}{g(L)}}-e^{2\pi i \frac{f(\boldsymbol{x}_n)}{g(L)}}\bigr||\langle m|H|n\rangle|\Biggr).
\end{align*}
Noting that for $\text{dist}(n,m)\le R$, we have the following relationship hold
    \begin{equation}
    \begin{aligned}
    \label{eqs:le_order}
       |e^{2\pi i \frac{f(\boldsymbol{x}_m)}{g(L)}}-e^{2\pi i \frac{f(\boldsymbol{x}_n)}{g(L)}}| &=|e^{2\pi i \frac{f(\boldsymbol{x}_m)}{g(L)}}(1-e^{2\pi i \frac{f(\boldsymbol{x}_n)-f(\boldsymbol{x}_m)}{g(L)}})|\\&\le\mathcal{O}(\frac{R}{L}), 
    \end{aligned}
    \end{equation}
    where we use $\frac{f(\boldsymbol{x}_n)-f(\boldsymbol{x}_m)}{g(L)}\le\mathcal{O}\left(\frac{R}{L}\right)$. This is obtained by considering the Taylor expansion of $f$ at $\boldsymbol{x}_m$
    \begin{equation}
    \begin{aligned}
        \frac{f(\v{x}_n)}{g(L)} &=\frac{f(\v{x}_m)+\nabla  f(\v{x})|_{\v{x}=\v{x}_m}\cdot(\v{x}_n-\v{x}_m)+\cdots}{g(L)}\\
        &\le\frac{f(\v{x}_m)}{g(L)}+\mathcal{O}(R/L)+\mathcal{O}(R/L^2)+\cdots,
    \end{aligned}
    \end{equation}
    where we use $\operatorname{deg}(f)=\operatorname{deg}(g)$.
    It follows that
    $$
        \parallel[e^{2\pi i \frac{f(X,Y,Z,\dots)}{g(L)}},H]\parallel\le \mathcal{O}\left(\frac{R}{L}\parallel H\parallel \right).
    $$
\end{proof}
\begin{sectionlemma}
\label{lem: normP}
    Given a Hamiltonian $H$, as defined above, with a spectral gap $\Delta E$, the spectral projector $P$ is called the Fermi projection, defined as follows:
    \begin{equation}
        P=\sum_{i=1}^{N_{occ}}|\Psi_i\rangle \langle \Psi_i |.
    \end{equation}
    The summation is taken over all occupied states. The following relationship holds true,
    \begin{equation}
        \parallel[e^{2\pi i \frac{f(X,Y,Z,\dots)}{g(L)}},P]\parallel\le \mathcal{O}\left(\frac{R}{L} \frac{\parallel H\parallel}{\Delta E} \right),
    \end{equation}
    for arbitrary polynomials $f$ and $g$ with $\operatorname{deg}(f)=\operatorname{deg}(g)$.
\end{sectionlemma}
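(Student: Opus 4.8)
The plan is to transfer the commutator bound of Theorem~\ref{thm: hamiNorm} from $H$ to its Fermi projection $P$, with the spectral gap $\Delta E$ as the only extra input. Write $U:=e^{2\pi i \frac{f(X,Y,Z,\dots)}{g(L)}}$ and $Q:=\mathbbm{1}-P$. First I would record the algebraic identity $[U,P]=QUP-PUQ$, which follows from $P^{2}=P$ and $PQ=QP=0$; this reduces the estimate to bounding the two off-diagonal blocks $\parallel PUQ\parallel$ and $\parallel QUP\parallel$.

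The core step is to show that each off-diagonal block is controlled by $\parallel[U,H]\parallel/\Delta E$. Since $P$ is a spectral projector of $H$ it commutes with $H$, so $HP=PHP$ and $HQ=QHQ$; sandwiching $[U,H]$ between $P$ and $Q$ yields, with $M:=PUQ$, $H_P:=PHP$, $H_Q:=QHQ$, the Sylvester identity $H_PM-MH_Q=-P[U,H]Q$. As $H$ is Hermitian with gap $\Delta E$, the spectra of $H_P$ and $H_Q$ on $\operatorname{ran}P$ and $\operatorname{ran}Q$ are separated by at least $\Delta E$; shifting by the gap midpoint $c$ so that $\sigma(H_P-c)\le-\Delta E/2$ and $\sigma(H_Q-c)\ge\Delta E/2$, I would write the unique solution in exponential-integral form $M=\int_{0}^{\infty}e^{t(H_P-c)}\,P[U,H]Q\,e^{-t(H_Q-c)}\,dt$ and bound $\parallel M\parallel\le\parallel[U,H]\parallel\int_{0}^{\infty}e^{-t\Delta E}\,dt=\parallel[U,H]\parallel/\Delta E$. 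Repeating the argument with $U$ replaced by $U^{\dagger}$ (noting $\parallel[U^{\dagger},H]\parallel=\parallel[U,H]\parallel$) and taking adjoints bounds $\parallel QUP\parallel$ identically.

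Combining the blocks via the triangle inequality gives $\parallel[U,P]\parallel\le 2\parallel[U,H]\parallel/\Delta E$, and substituting $\parallel[U,H]\parallel\le\mathcal{O}(\tfrac{R}{L}\parallel H\parallel)$ from Theorem~\ref{thm: hamiNorm} produces the claimed $\parallel[U,P]\parallel\le\mathcal{O}(\tfrac{R}{L}\tfrac{\parallel H\parallel}{\Delta E})$ for every pair of polynomials with $\deg f=\deg g$. The main obstacle is the Sylvester bound: one must see that the off-diagonal block of $U$ is pinned, through $[U,H]$, to the solution of a Sylvester equation whose operator is invertible precisely because of the gap, with operator-norm inverse at most $1/\Delta E$ --- the exponential integral representation is what makes this quantitative. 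This is also why the crude alternative of differentiating the resolvent formula $P=\frac{1}{2\pi i}\oint(z-H)^{-1}\,dz$ is not sharp enough: its contour has length $\mathcal{O}(\parallel H\parallel)$ while the resolvent is only $\mathcal{O}(1/\Delta E)$, so it loses a factor $\parallel H\parallel/\Delta E$. If tracking domains in the Sylvester route proves delicate, the fallback is a smooth spectral cutoff $g_{\Delta}(H)$ with $\parallel g_{\Delta}'\parallel_{\infty}\sim 1/\Delta E$ together with the Duhamel identity $[U,e^{itH}]=i\int_{0}^{t}e^{isH}[U,H]e^{i(t-s)H}\,ds$, which reaches the same bound.
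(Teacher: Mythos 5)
Your proof is correct, but it takes a genuinely different route from the paper. The paper bounds $\parallel[U,P]\parallel$ by inserting $U$ into the contour-integral representation $P=\frac{1}{2\pi i}\oint_{\Gamma}(z-H)^{-1}dz$ and using the identity $[U,(H-z)^{-1}]=(H-z)^{-1}[H,U](H-z)^{-1}$; because the integrand then carries $\parallel(H-z)^{-1}\parallel^{2}$, deforming the contour so that it runs vertically through the gap gives $\oint_{\Gamma}\parallel(H-z)^{-1}\parallel^{2}|dz|\le\int_{-\infty}^{\infty}\bigl[(\Delta E/2)^{2}+y^{2}\bigr]^{-1}dy=2\pi/\Delta E$, and hence the same $\parallel[U,P]\parallel\lesssim\parallel[U,H]\parallel/\Delta E$ that you obtain. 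You instead work algebraically: the off-diagonal decomposition $[U,P]=QUP-PUQ$, the Sylvester identity $H_P M-MH_Q=-P[U,H]Q$ for $M=PUQ$, and the exponential-integral solution bounded by $\int_{0}^{\infty}e^{-t\Delta E}dt=1/\Delta E$. Both arguments are quantitatively equivalent and then conclude via Theorem~\ref{thm: hamiNorm}; yours avoids contour integration entirely and makes the role of the gap (invertibility of the Sylvester operator with inverse norm $1/\Delta E$) explicit, at the price of some block/domain bookkeeping, while the paper's is shorter once the resolvent representation is granted. One small correction to your commentary: the resolvent route is not the ``crude alternative'' you describe. Because the commutator identity produces two resolvents, the integrand decays like $1/|z|^{2}$, so no factor of the contour length $\mathcal{O}(\parallel H\parallel)$ is incurred and nothing is lost relative to your bound --- this is exactly how the paper proceeds.
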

The proof of this Lemma is based on an idea from Ref.~\cite{BottIndexUnitary2022toniolo}.
\begin{proof}
 For $z \notin \sigma(H)$ and any matrix $A$ with the size of $H$, we have an equality
    \begin{equation}
    \label{eqs:commutor_equa}
        \begin{aligned}
\left[A,(H-z)^{-1}\right] &=(H-z)^{-1}\left[(H-z),A\right](H-z)^{-1}  \\ &=(H-z)^{-1}\left[H,A\right](H-z)^{-1}.
\end{aligned}
    \end{equation}
Now, we use the contour integral representation of the Fermi projection $P$, with the loop $\Gamma$ in the complex plane enclosing the eigenvalues below the Fermi level~\cite{BottIndexUnitary2022toniolo}
\begin{equation}
    P=\frac{1}{2 \pi i} \oint_{\Gamma} dz \left(z-H\right)^{-1}.
\end{equation}
The contour integral representation can be understood by considering $P|\Psi_{occ}\rangle=|\Psi_{occ}\rangle\frac{1}{2 \pi i} \oint_{\Gamma} dz \left(z-E_{occ}\right)^{-1}=|\Psi_{occ}\rangle$ and $P|\Psi_{unocc}\rangle=|\Psi_{unocc}\rangle\frac{1}{2 \pi i} \oint_{\Gamma} dz \left(z-E_{unocc}\right)^{-1}=0$.

Utilizing the contour integral representation and Eq.~\eqref{eqs:commutor_equa}, it is easy to obtain that
\begin{equation}
\label{eqs:intP}
\begin{aligned}
       &\parallel[e^{2\pi i \frac{f(X,Y,Z,\dots)}{g(L)}},P]\parallel \\&\le \frac{1}{2 \pi}\parallel[e^{2\pi i \frac{f(X,Y,Z,\dots)}{g(L)}},H]\parallel \oint_{\Gamma} \parallel\left(H-z\right)^{-1}\parallel^2 |dz|,  
\end{aligned}
\end{equation}
with $\parallel\left(H-z\right)^{-1}\parallel^{2}= [\operatorname{dist}\left(z, \sigma(H)\right)]^{-2}$. $\operatorname{dist}\left(z, \sigma(H)\right)$ denotes the distance from a point $z$ to the region $\sigma(H)$. Taking the radius of $\Gamma$ to $\infty$, the loop-integral becomes
    \begin{equation}
    \label{eqs:loop-int}
    \begin{aligned}
           \oint_{\Gamma} \parallel\left(H-z\right)^{-1}\parallel^2 |dz| &=\oint_{\Gamma} [\operatorname{dist}\left(z, \sigma(H)\right)]^{-2} |dz| \\&=\int_{-\infty}^{\infty} \frac{1}{(\frac{\Delta E}{2})^2+(y)^2}dy\\&=\frac{2 \pi}{\Delta E}, 
    \end{aligned}
\end{equation}
where $y$ denotes $\operatorname{Im}z$.

Combining Theorem~\ref{thm: hamiNorm}, Eqs.~\eqref{eqs:intP} and ~\eqref{eqs:loop-int}, we have
$$
        \parallel[e^{2\pi i \frac{f(X,Y,Z,\dots)}{g(L)}},P]\parallel\le \mathcal{O}\left(\frac{R}{L} \frac{\parallel H\parallel}{\Delta E} \right).
$$
\end{proof}
With this Lemma proved, we obtain the following corollary (Theorem~\ref{thm:nozero} in the main text).
\begin{sectioncorollary}
\label{coro:nozero}
    If no zero-energy corner states exist in a system with chiral symmetry, finite coupling range, and energy gap, then the following conclusion holds true in the limit of large system size $L\to \infty$, with $\hat{M}=e^{2\pi i \frac{f(X,Y,Z,\dots)}{g(L)}}$.
\begin{equation}
\begin{aligned}
&-1\notin \sigma(\hat{M}q \hat{M}^{\dagger}q^{\dagger}),\\
&\operatorname{Bott}\left(\hat{M},q\right)=0,
\end{aligned}
\end{equation}
for arbitrary polynomial $f$ and $g$ with $\operatorname{deg}(f)=\operatorname{deg}(g)$. $q=U_AU_B^\dagger$.
\end{sectioncorollary}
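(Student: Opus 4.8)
The plan is to identify $q$ with the off-diagonal block of the flattened Hamiltonian $\operatorname{sgn}(H)$, transport the projector estimate of Lemma~\ref{lem: normP} to a bound on $[\hat{M},q]$, and then kill the Bott index by a homotopy to the identity.

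\emph{Step 1: realize $q$ through the Fermi projection.} Under the hypotheses, the absence of zero-energy corner states together with the assumed energy gap means the full spectrum of $H$ has a gap $\Delta E>0$ around $E=0$; equivalently all singular values $\sigma_i$ of $h$ satisfy $\sigma_i\ge \Delta E$, so $q=U_AU_B^{\dagger}$ is a genuine unitary. Using the eigenstates $|\Psi_i^{\pm}\rangle=\tfrac{1}{\sqrt{2}}[\Psi_{A,i},\pm\Psi_{B,i}]^{\mathrm{T}}$ at energies $\pm\sigma_i$, the flattened Hamiltonian $Q:=\operatorname{sgn}(H)$ equals $\mathbbm{1}-2P$, where $P=\sum_i|\Psi_i^{-}\rangle\langle\Psi_i^{-}|$ is the Fermi projection onto the negative-energy states, and in the chiral eigenbasis $Q$ has the off-diagonal block form with blocks $q$ and $q^{\dagger}$.

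\emph{Step 2: transport the commutator bound.} Position operators do not mix orbitals, so $\hat{M}=\hat{M}_A\oplus\hat{M}_B$ in the chiral grading, with $\parallel\hat{M}_A-\hat{M}_B\parallel=\mathcal{O}(1/L)$ since the $A$ and $B$ orbitals of a unit cell differ in position by $\mathcal{O}(1)\ll L$ (the Taylor estimate already used in Theorem~\ref{thm: hamiNorm}). Hence $\parallel[\hat{M},Q]\parallel=\parallel\hat{M}_Aq-q\hat{M}_B\parallel\ge\parallel[\hat{M}_A,q]\parallel-\mathcal{O}(1/L)$, while Lemma~\ref{lem: normP} gives $\parallel[\hat{M},Q]\parallel=2\parallel[\hat{M},P]\parallel\le\mathcal{O}\!\left(\tfrac{R}{L}\tfrac{\parallel H\parallel}{\Delta E}\right)$. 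Because $R$ is fixed, $\parallel H\parallel$ is bounded independently of $L$ (finite range plus bounded hoppings), and $\Delta E$ stays open, this bound tends to $0$, so $\parallel[\hat{M},q]\parallel\le\mathcal{O}(R/L)\to0$. From the identity $\hat{M}q\hat{M}^{\dagger}q^{\dagger}-\mathbbm{1}=(\hat{M}q-q\hat{M})\hat{M}^{\dagger}q^{\dagger}$ and Remark~\ref{rem:equi} we get $\parallel\hat{M}q\hat{M}^{\dagger}q^{\dagger}-\mathbbm{1}\parallel=\parallel[\hat{M},q]\parallel<2$ for $L$ large, i.e.\ $-1\notin\sigma(\hat{M}q\hat{M}^{\dagger}q^{\dagger})$ and $\operatorname{Bott}(\hat{M},q)$ is well defined.

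\emph{Step 3: homotope to the identity.} Introduce $\hat{M}_s=e^{2\pi i\, s\, f(X,Y,Z,\dots)/g(L)}$ for $s\in[0,1]$; this is continuous in $s$, unitary, and $\hat{M}_0=\mathbbm{1}$. Since $\operatorname{deg}(sf)=\operatorname{deg}(g)$ and the coefficients of $sf$ are bounded uniformly for $s\in[0,1]$, Steps 1–2 apply verbatim to each $\hat{M}_s$, yielding $\parallel[\hat{M}_s,q]\parallel<2$ for all $s$ once $L$ is large. Theorem~\ref{thm: homoinvar} then gives $\operatorname{Bott}(\hat{M},q)=\operatorname{Bott}(\hat{M}_1,q)=\operatorname{Bott}(\mathbbm{1},q)=\tfrac{1}{2\pi i}\operatorname{Tr}\operatorname{log}\mathbbm{1}=0$. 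The main obstacle is Step 1: one must recognize that $q$ is precisely the off-diagonal part of $\mathbbm{1}-2P$, which is what makes the projector estimate of Lemma~\ref{lem: normP} applicable, and one must use ``no zero-energy corner states'' together with ``energy gap'' as exactly the statement that $P$ (hence $q$) exists with a gap $\Delta E$ that does not close as $L\to\infty$. A secondary point is that vanishing of the index cannot be argued by bounding it as a small integer, because the matrix size $N\sim L^{n}$ grows; the homotopy to $\hat{M}_0=\mathbbm{1}$ is what circumvents this.
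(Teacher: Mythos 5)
Your proposal is correct and takes essentially the same route as the paper's own proof: it identifies $q$ as the off-diagonal block of the flattened Hamiltonian (equivalently of the Fermi projection $P$), transfers the bound of Lemma~\ref{lem: normP} to $\parallel[\hat{M},q]\parallel$, and then annihilates the index via the homotopy $\hat{M}_s=e^{2\pi i s f/g}$ using Theorem~\ref{thm: homoinvar} (the paper invokes Corollary~\ref{coro: existence}, which is the same statement in contrapositive form). Your added care about the possible $\mathcal{O}(1/L)$ mismatch between the A- and B-sublattice position operators is a harmless refinement of a point the paper leaves implicit.
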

\begin{proof}
    From Lemma~\ref{lem: normP} and projector $P=(\mathbbm{1}-H_{\text{Flat}})/2$, which in the eigenbasis of the chiral operator can be written as
    \begin{equation}
    P=\left( \begin{matrix}
	\frac{\mathbbm{1}}{2}&		-\frac{q}{2}\\
	-\frac{q^{\dagger}}{2}&		\frac{\mathbbm{1}}{2}\\
\end{matrix} \right),
\end{equation}
we have
    \begin{equation}
        \parallel[e^{2\pi i \frac{f(X,Y,Z,\dots)}{g(L)}},q]\parallel\le \mathcal{O}\left(\frac{R}{L} \frac{\parallel H\parallel}{\Delta E} \right).
    \end{equation}
    We note that in this equation the position operators are represented in the subspace of chiral operator.
    It follows that
    \begin{equation}
        -1\notin \sigma(\hat{M}q \hat{M}^{\dagger}q^{\dagger}).
    \end{equation}
    Also, when $L\to \infty$, we have $\parallel[e^{2\pi i \frac{f(X,Y,Z,\dots)\times t}{g(L)}},q]\parallel<2$ for all $t\in[0,1]$. From Corollary~\ref{coro: existence}, it follows that
    $$
    \operatorname{Bott}\left(e^{2\pi i \frac{f(X,Y,Z,\dots)}{g(L)}},q\right)=0.
    $$
\end{proof}
Thus, we provide the proof of the corollary~\ref{coro:nozero}, which is the Theorem~\ref{thm:nozero} in the main text.

\bigskip
\noindent \textbf{Detailed derivations of two equations in the main text}\\
In the following, we provide detailed derivation of Equation~\eqref{eqs:diagonal}. In an $n$-dimensional system under OBC, typically there are $\left(n-i\right)$-dimensional states ($i=0,1,\ldots,n$). The wavefunction of an $\left(n-i\right)$-dimensional state can be expressed as
\begin{equation}
    \left|\Psi_{n-i}\right\rangle=\sum_{{\v{x}}\in V_{n-i}}{a_{n-i,{\v{x}}}c_{\v{x}}^\dagger}\left|0\right\rangle+\sum_{{\v{x}}\notin V_{n-i}}a_{n-i,{\v{x}}}e^{-\beta|\Delta {\v{x}}|}c_{\v{x}}^\dagger\left|0\right\rangle,
\end{equation}
where $V_{n-i}$ is a region with $\mathcal{O}\left(L^{n-i}\right)$ hypervolume, and $|\Delta {\v{x}}|$ is the distance from ${\v{x}}$ to the position where the exponential decay begins. We can choose $V_{n-i}$ such that for ${\v{x}}\notin V_{n-i}$, $e^{-\beta|\Delta {\v{x}}|}<\mathcal{O}\left(L^n\right)$. Given that ${\langle\Psi}_{n-i}\left|\Psi_{n-i}\right\rangle=1$, we have $a_{n-i,{\v{x}}}=\mathcal{O}\left(\sqrt{\frac{1}{L^{n-i}}}\right)$. Thus, without loss of generality, we assume that $j>i$,
\begin{equation}
\begin{aligned}
        &{\langle\Psi}_{n-j}\left|\Psi_{n-i}\right\rangle \\&=\sum_{{\v{x}}\in V_{n-i}}{a_{n-j,{\v{x}}}^{\ast}a_{n-i,{\v{x}}}}\\&+\sum_{{\v{x}}\in V_{n-j},\ \ {\v{x}}\notin V_{n-i}}{a_{n-j,{\v{x}}}^{\ast}a_{n-i,{\v{x}}}e^{-\beta|\Delta {\v{x}}_i|}}\\&+\sum_{{\v{x}}\notin V_{n-j},\ \ {\v{x}}\notin V_{n-i}}{a_{n-j,{\v{x}}}^{\ast}a_{n-i,{\v{x}}}e^{-\beta|\Delta {\v{x}}_i|}}e^{-\beta|\Delta {\v{x}}_j|}\\ &=\mathcal{O}(\sqrt{\frac{1}{L^{j-i}}})\overset{L\to \infty}{\rightarrow}0.
\end{aligned}
\end{equation}
Similarly, we have
\begin{equation}
\begin{aligned}
&{\langle\Psi}_{n-j}\left|\hat{M}|\Psi_{n-i}\right\rangle\\&=\sum_{{\v{x}}\in V_{n-i}}{a_{n-j,{\v{x}}}^{\ast}a_{n-i,{\v{x}}}}e^{2\pi i\frac{f\left({\v{x}}\right)}{g\left(L\right)}}\\&+\sum_{{\v{x}}\in V_{n-j},\ \ {\v{x}}\notin V_{n-i}}{a_{n-j,{\v{x}}}^{\ast}a_{n-i,{\v{x}}}{e^{2\pi i\frac{f\left({\v{x}}\right)}{g\left(L\right)}}e}^{-\beta|\Delta{\v{x}}_i|}}\\&+\sum_{{\v{x}}\notin V_{n-j},\ \ {\v{x}}\notin V_{n-i}}{a_{n-j,{\v{x}}}^{\ast}a_{n-i,{\v{x}}}e^{2\pi i\frac{f\left({\v{x}}\right)}{g\left(L\right)}}e^{-\beta|\Delta{\v{x}}_i|}}e^{-\beta|\Delta{\v{x}}_i|}\\&=\mathcal{O}(\sqrt{\frac{1}{L^{j-i}}})\overset{L\to \infty}{\rightarrow}0.
\end{aligned}
\end{equation}

In the following, we show that in the limit $L\to \infty$ corner states will become eigenstates of the normalized position operator $\v{X}/L$ (Equation~\eqref{eqs:corner_as_eigen}).

The wavefunction of a corner state can be written as $\Psi_{\text{corner}}=\sum_{{\v{x}}}{a_{\v{x}} e^{-\beta\left|\v{{\v{x}}}-\v{{\v{x}}}_c\right|}c_{\v{x}}^\dagger\left|0\right\rangle}$, where $\v{x}_c$ is the position vector at which the corner state is localized, $a_{\v{x}}$ is the component at position $x$, and $\beta>0$ denotes the decay length. 
It follows that
\begin{equation}
\begin{aligned}
        &\frac{\v{X}}{L}\Psi_{\text{corner}} \\&=\sum_{\v{x}}{a_{\v{x}}{\frac{\v{x}}{L}e}^{-\beta\left|\v{x}-\v{x}_{c}\right|}c_{\v{x}}^\dagger\left|0\right\rangle} \\
        &=\frac{\v{x}_c}{L}\sum_{\v{x}}{a_{\v{x}} e^{-\beta\left|\v{x}-\v{x}_{c}\right|}c_{\v{x}}^\dagger\left|0\right\rangle}+\sum_{\v{x}}a_{\v{x}}\frac{\Delta \v{x}}{L}e^{-\beta|\Delta \v{x}|}c_{\v{x}}^\dagger\left|0\right\rangle \\
        &=\frac{\v{x}_c}{L}\sum_{\v{x}}{a_{\v{x}}e^{-\beta\left|\v{x}-\v{x}_c\right|}c_{\v{x}}^\dagger\left|0\right\rangle}+\mathcal{O}\left(1/L\right)+\mathcal{O}\left(e^{-\beta \mathcal{O}\left(L\right)}\right) \\
        & \overset{L\to \infty}{\rightarrow} \frac{\v{x}_c}{L}\sum_{\v{x}}{a_{\v{x}} e^{-\beta\left|\v{x}-\v{x}_{c}\right|}c_{\v{x}}^\dagger\left|0\right\rangle}=\frac{\v{x}_c}{L}\Psi_{\text{corner}}.
\end{aligned}
\end{equation}

\bigskip
\noindent \textbf{Shaped-dependent partially periodic boundary}\\ 
In this section, we describe the boundary conditions applied in our sum rule analysis. In an $n$-dimensional space, specifying the boundary conditions along $n$ linearly independent vectors fully determines those for the entire system. However, as noted in the main text, the choice of boundary conditions depends on the geometry of the system, which may seem to contradict the earlier statement. To clarify, the boundary conditions we discuss here focus on how the system's boundaries are ``glued" together.

For instance, applying periodic boundary conditions in an even-sided system can be interpreted as gluing together all pairs of parallel sides~\cite{davis2011periodictrajectoriesregularpentagon}. For a polygon with $2m+1$ sides, we can first duplicate the polygon and then identify the parallel sides~\cite{davis2011periodictrajectoriesregularpentagon,connor2021hyperbolicstaircasesperiodicpaths}.

By selectively gluing certain boundaries while leaving others intact, we effectively remove specific boundary conditions, enabling us to focus on the topological properties of both the bulk and the boundary states that remain localized at the retained boundaries.

\begin{algorithm}[H]
\caption{Generating Polynomials and the Associated Matrix $\mathcal{M}$ for an $m$-sided Shape}
\label{alg:polynomials}
   \begin{algorithmic}[1]
   \Statex \textbf{Input:} $m \in \mathbb{N}$ (number of polygon sides); $\v{x}_1^{(c)}, \dots, \v{x}_m^{(c)}$ (position vectors of the corners); $n$ (dimension).
   \Statex \textbf{Output:} Polynomials $f_{m}^{(1)}, \dots, f_{m}^{(m-1)}$ and corresponding polynomial $g(L)$; associated matrix $\mathcal{M}$ such that $\operatorname{det}(\mathcal{M}) \neq 0$.  
   
   \Procedure{Generate}{$f_{m}^{(i)}$,$g_m(L)$, and $\mathcal{M}$}
   
   \State Choose an $(m-1) \times (m-1)$ matrix $B$ with entries $B_{ij} = \pm 1$ such that
   \[
   \det\begin{pmatrix}
   A & B \\
   1 & D
   \end{pmatrix} \neq 0,
   \]
   where 
   \[
   A = [1, 1, \dots, 1]^T, \quad D = [1, 1, \dots, 1]^T.
   \]
   \State Define a homogeneous polynomial in the position operators as
   \begin{align*}
       p^{(l)}(X_1,\dots,X_{n}) = \sum_{\alpha \in \mathbb{N}, |\alpha|=l} a_{\alpha} X^{\alpha_1}_{1}X^{\alpha_2}_{2}\cdots X^{\alpha_n}_{n},
   \end{align*}
   where $a_{\alpha} \in \mathbb{R}, \quad l \in \mathbb{N}^{\ast}$
   \State Initialize $r \gets 1$ (solution counter) and $l \gets 1$ (degree of the polynomial).
   \While{$r \leq m-1$}
       \State Solve for the coefficients $a_{ij}$ using the equation
       \[
       \frac{\{ p^{(l)}(\v{x}_2^{(c)}), \dots, p^{(l)}(\v{x}_m^{(c)}) \}}{p^{(l)}(\v{x}_1^{(c)})} = \{ B_{r,1}, \dots, B_{r,m-1} \}.
       \]
       \If{the system of equations is redundant}
           \State Set selected coefficients $a_{ij}$ to remove redundancy.
       \EndIf
       \If{no solution is found for the current degree $l$}
           \State Increment $l \gets l + 1$ and repeat Step 6.
       \Else
           \State Set $f_m^{(r)} \gets p^{(l)}$ and 
                 \[
                    g_m^{(r)}(L) \gets \left| f_m^{(r)}(\v{x}_1^{(c)}) \right|.
                 \]
           \State Increment $r \gets r + 1$.
       \EndIf
   \EndWhile
   \State Set the matrix $\mathcal{M}$
   \[
   \mathcal{M} \gets \frac{1}{2} \begin{pmatrix}
   A & B \\
   1 & D
   \end{pmatrix}.
   \]
   \State \Return The set of $m-1$ pairs $\{ f_m^{(r)}, g_m^{(r)}(L) \}$ and the matrix $\mathcal{M}$.
   \EndProcedure
\end{algorithmic}
\end{algorithm}

\bigskip
\noindent \textbf{Procedure for generating polynomials and the associated matrix $\mathcal{M}$}  
In this section, we provide the algorithm for generating polynomials and their associated matrix $\mathcal{M}$, which satisfies $\operatorname{det}(\mathcal{M}) \neq 0$ for an $m$-cornered shape in $n$-dimensional space.

The matrix $\mathcal{M}$ is defined (as described in the main text) as
\begin{equation}
\mathcal{M}_{ij} = \begin{cases}
    \frac{\operatorname{sgn}(f_{m}^{(i)}(\boldsymbol{x}_j)/g_{m}^{(i)}(L))}{2}, & \text{for } 1 \le i \le m-1, \\
    \frac{1}{2}, & \text{for } i = m.
\end{cases}
\end{equation}
The algorithm to generate the polynomials and the corresponding matrix $\mathcal{M}$ is presented in Algorithm~\ref{alg:polynomials}.

\bigskip

\noindent{\bf{\large{RESOURCE AVAILABILITY}}}

\bigskip

\noindent{\large{\bf{Lead contact}}}

\noindent Further information and requests should be directed to and will be fulfilled by Fengcheng Wu and Meng Xiao.
\bigskip

\noindent{\large{\bf{Materials availability}}}

\noindent This study did not generate new materials.

\bigskip
\noindent{\large{\bf{Data and code availability}}}

\noindent \begin{itemize}
    \item All the theoretical data generated in this study will be deposited at Zenodo and publicly available as of the date of publication.
    \item The computer codes used to generate the results reported in this paper are available from the lead contact upon request.
\end{itemize}
 
\bigskip

\noindent \textbf{ACKNOWLEDGEMENTS}\\
J.-Z. L. and M.X. are supported by the National Key Research and Development Program of China (Grant No. 2022YFA1404900), the National Natural Science Foundation of China (Grant No. 12334015, Grant No. 12274332 and Grant No. 12321161645).
X.-J. L. and F.W. are supported by Key Research and Development Program of Hubei Province (Grant No. 2022BAA017). 

\bigskip

\noindent \textbf{AUTHOR CONTRIBUTIONS}\\
J.Z.L., X.J.L., F.W., and M.X. conceived the idea.  F.W., and M.X. supervised the project. J.Z.L. and X.J.L. did the theoretical analysis. J.Z.L., X.J.L., F.W., and M.X. wrote the manuscript.

\bigskip
\noindent \textbf{DECLARATION OF INTERESTS}\\
The authors declare no competing interests.

\widetext
\clearpage
\onecolumngrid   
\beginsupplement

\begin{center}
{\large \textbf{Supplemental Information for\\
``Exact Universal Characterization of Chiral-Symmetric Higher-order Topological Phases''}}
\end{center}

The Supplemental Information provides further details on several aspects of this study. Specifically, it includes: a discussion of chiral symmetry with non-zero trace (Supplemental Information, Note 1); An analysis of systems exhibiting non-regular geometries (Supplemental Information, Note 2); a demonstration of the well-defined nature of the Bott index under open boundary conditions (Supplemental Information, Note 3); a compilation of polynomial series and their associated matrices for systems with diverse shapes (Supplemental Information, Note 4); a discussion of the real-space properties of the quadrupole moment and the multipole chiral number (Supplemental Information, Note 5); a discussion of the Bott index vector in systems possessing crystalline symmetry (Supplemental Information, Note 6); and additional lattice models with their corresponding calculations (Supplemental Information, Note 7).

\makeatletter
\renewcommand{\l@subsection}[2]{}
\renewcommand{\l@subsubsection}[2]{}
\makeatother

\tableofcontents

\section*{Supplemental Information, Note 1. Discussion of chiral symmetry with non-zero trace}
\addcontentsline{toc}{section}{Supplemental Information, Note 1. Discussion of chiral symmetry with non-zero trace}

In this section, we demonstrate that the topological classification of $n$-th order topological phases in $n$-dimensional systems possessing chiral symmetry with non-zero trace is trivial.

For the situation where the chiral operator has non-zero trace, the chiral operator is called general sublattice operator $S$ with non-zero trace~\cite{dai2024topologicalclassificationchiralsymmetry}, which can be represented as
\begin{eqnarray}
    S=\begin{pmatrix}\mathbbm{1}_M&0\\0&-\mathbbm{1}_N \end{pmatrix},
\end{eqnarray}
with $\mathbbm{1}_N$ being the $N\times N$ identity matrix. Without loss of generality, we assume $N\ge M$.
A Hamiltonian with such symmetry can be written as
\begin{equation}
H=\begin{pmatrix}0&h\\h^\dag&0\end{pmatrix},
\end{equation}
where $h$ is an $M \times N$ matrix. 
For this situation, the topological classification has been done by Ref.~\cite{dai2024topologicalclassificationchiralsymmetry}. 
First, since the $\operatorname{Rank}(h)\le N$, the dimension of zero space of $h$ is not less than $N-M$, implying that, in general, there exist $N-M$ flat bands at zero energy. Furthermore, the classifying spaces for this situation have been derived, assuming two energy gaps below and above the zero energy. The classifying space for the chiral symmetry with non-zero trace is the complex Stiefel manifold.
\begin{equation}
V_M(\mathbb{C}^N)=U(N)/U(N-M).
\end{equation}
The stable first homotopy group of $V_M(\mathbb{C}^N)$ is given as follows
\begin{equation}
    \pi_1(V_M(\mathbb{C}^N))= \begin{cases}
        \mathbb{Z} & \text{ if } N-M=0 \\
        0 & \text{ if } N-M\ge 1
    \end{cases}.
\end{equation}
Thus, for one-dimensional systems possessing chiral symmetry with non-zero trace, the topological classification is trivial. Similarly, we can deduce that the topological classification for the $n$-th order topological phases in $n$-dimensional systems with chiral symmetry and non-zero trace is trivial.

\section*{Supplemental Information, Note 2. Discussion for systems with non-regular shapes}
\addcontentsline{toc}{section}{Supplemental Information, Note 2. Discussion for systems with non-regular shapes}

For systems featuring a non-regular shape with $m$ corners, we employ a mapping, $\mathcal{F}$, that converts them into their regular counterparts while preserving the corners, by utilizing the generalized Barycentric coordinates~\cite{hormann2006mean,floater2015generalized}. We then define polynomials $f$ and $g$ as $f^{\prime}\circ \mathcal{F}$ and $g^{\prime}\circ \mathcal{F}$, respectively. Here, $\circ$ denotes function composition. $f^{\prime}$ and $g^{\prime}$ are the corresponding polynomials for an $m$-cornered regular shape. $f^{\prime}$ and $g^{\prime}$ are derived and provided in the previous section.

\section*{Supplemental Information, Note 3. The well-defined property of Bott index under open boundary conditions}
\addcontentsline{toc}{section}{Supplemental Information, Note 3. The well-defined property of Bott index under open boundary conditions}

Under open boundary conditions, zero-energy corner states may appear in a system, meaning that $\det{h}=0$ and $\mathrm{rank}\left(\ker (h)\right)\neq 0$. $\ker (h)$ denotes the kernel of $h$, which is the vector space spanned by vectors satisfying $h\cdot\v{v}=\v{0}$, defined as $\ker (h)=\{\v{v}|h\cdot\v{v}=\v{0}\}$. Thus, the operator $q$ is not unique due to arbitrary unitary transformations applied to vectors of $U_A$ and $U_B$ that span the cokernel and kernel of $h$~\cite{HandbookLinearAlgebra2013hogben}. The cokernel of $h$ equals $\ker (h^{\dagger})$. Thus, these transformations can be represented by arbitrary unitary transformations $W_A$ on $U_{A,\text{corner}}$ and $W_B$ on $U_{B,\text{corner}}$. 

With the definition of $\hat{M}$ in the main text, we have
\begin{equation}
\begin{aligned}
W_{\alpha}^{\dagger}U_{\alpha,\text{corner}}^{\dagger}\hat{M}U_{\alpha,\text{corner}}W_{\alpha} &=-W_{\alpha}^{\dagger}U_{\alpha,\text{corner}}^{\dagger}U_{\alpha,\text{corner}}W_{\alpha}\\&=-\mathbbm{1}.
\end{aligned}
\end{equation}
This agrees with the result obtained in the main text when there is no transformation, regardless of arbitrary transformation. 

\section*{Supplemental Information, Note 4. The series of polynomials and associated matrices $\mathcal{M}$ for systems with ${\{4,5,6,8,12\}}$-sided regular polygon shapes}
\addcontentsline{toc}{section}{Supplemental Information, Note 4. The series of polynomials and associated matrices $\mathcal{M}$ for systems with ${\{4,5,6,8,12\}}$-sided regular polygon shapes}

In this section, we follow this procedure in Methods section of main text and derive the series of polynomials for systems with ${\{4,5,6,8,12\}}$-sided regular polygon shapes. The order of corners in the configuration vector is selected in a counterclockwise order, starting from the $x$-axis. $L$ represents the side length of the system. The coordinate origin is always placed at the geometric center of each system with regular polygon shapes. 

For systems with a square shape, we choose the $x$-axis pointing toward the midpoint between two corners and the series of polynomials $f/g$ as follows
\begin{small}
\begin{equation}
        \left\{2XY/L^2, X/L, Y/L \right\},
\end{equation}
\end{small}
with 
\begin{small}
\begin{equation}
\mathcal{M}=\frac{1}{2}\begin{pmatrix}
    1 & -1 & 1 &-1 \\
    1 & -1 & -1 &1 \\
    1 & 1 & -1 &-1 \\
    1 & 1 & 1 &1
\end{pmatrix}.
\end{equation}
\end{small}

For systems with $5$-sided regular polygon shape, we choose the $x$-axis pointing to a corner and the series of polynomials $f/g$ as follows
\begin{small}
\begin{equation}
\begin{aligned}
   \{&X^4-4 \sqrt{1+\frac{2}{\sqrt{5}}} X^3 Y-2 X^2 Y^2+\frac{4}{5} \sqrt{25+2
   \sqrt{5}} X Y^3+\frac{Y^4}{5},\\&X^4+4 \sqrt{1-\frac{2}{\sqrt{5}}} X^3 Y-2 X^2
   Y^2+\frac{4}{5} \sqrt{25-2 \sqrt{5}} X Y^3+\frac{Y^4}{5},\\&X^4-\frac{2}{5}
   \left(5+6 \sqrt{5}\right) X^2 Y^2+\frac{1}{5} \left(1+4 \sqrt{5}\right)
   Y^4,\\&X^4+\left(\frac{12}{\sqrt{5}}-2\right) X^2 Y^2+\frac{1}{5} \left(1-4
   \sqrt{5}\right) Y^4\}/2(\frac{L}{2\sin \pi/5})^4
\end{aligned}
\end{equation}
\end{small}
with
\begin{small}
\begin{equation}
\mathcal{M}=\frac{1}{2}\begin{pmatrix}
 1 & 1 & 1 & -1 & -1 \\
 1 & 1 & -1 & 1 & -1 \\
 1 & 1 & -1 & -1 & 1 \\
 1 & -1 & 1 & 1 & -1 \\
 1 & 1 & 1 & 1 & 1 \\
\end{pmatrix}.
\end{equation}
\end{small}

For systems with a regular hexagon shape, we choose the $x$-axis pointing to a corner and the series of polynomials $f/g$ as follows
\begin{equation}
\begin{aligned}
     \{&(X^3-\frac{XY^2}{3}+\frac{8Y^3}{3\sqrt{3}})/2L^3, (X^2-\frac{4XY}{\sqrt{3}}-\frac{Y^2}{3})/2L^2,\\ &(X^2+\frac{4XY}{\sqrt{3}}-\frac{Y^2}{3})/2L^2,(X^3-3XY^2)/2L^3,\\ &(X^3+\frac{7XY^2}{3})/2L^3\},
\end{aligned}
\end{equation}
with
\begin{small}
\begin{equation}
\mathcal{M}=\frac{1}{2}\begin{pmatrix}
 1 & 1 & 1 & -1 & -1 & -1 \\
 1 & -1 & 1 & 1 & -1 & 1 \\
 1 & 1 & -1 & 1 & 1 & -1 \\
 1 & -1 & 1 & -1 & 1 & -1 \\
 1 & 1 & -1 & -1 & -1 & 1 \\
    1 & 1 & 1 &1 & 1 &1
\end{pmatrix}.
\end{equation}
\end{small}
For systems with $8$-sided regular polygon shape, we choose the $x$-axis pointing to a corner and the series of polynomials $f$ as follows
\begin{small}
\begin{equation}
\begin{aligned}
     \{&X^3-X^2 Y+\left(2 \sqrt{2}-1\right) X Y^2+Y^3,\\ &-X^3-\left(1+2 \sqrt{2}\right)
   X^2 Y+X Y^2+Y^3,\\ &-X^3-X^2 Y+\left(1-2 \sqrt{2}\right) X Y^2+Y^3,\\ &X^3-X^2
   Y-\left(1+2 \sqrt{2}\right) X Y^2+Y^3,\\ &-X^4-4 X^3 Y+Y^4,\\ &X^4-6 X^2 Y^2+Y^4,\\ &-X^4+4
   X^3 Y+Y^4\},
\end{aligned}
\end{equation}
\end{small}
and $g=2(\frac{L}{2\sin \pi/8})^{\operatorname{deg}(f)}$,
with
\begin{small}
\begin{equation}
\mathcal{M}=\frac{1}{2}\begin{pmatrix}
1 & 1 & 1 & -1 & -1 & -1 & -1 & 1 \\
 -1 & -1 & 1 & -1 & 1 & 1 & -1 & 1 \\
 -1 & -1 & 1 & 1 & 1 & 1 & -1 & -1 \\
 1 & -1 & 1 & 1 & -1 & 1 & -1 & -1 \\
 -1 & -1 & 1 & 1 & -1 & -1 & 1 & 1 \\
 1 & -1 & 1 & -1 & 1 & -1 & 1 & -1 \\
 -1 & 1 & 1 & -1 & -1 & 1 & 1 & -1 \\
 1 & 1 & 1 & 1 & 1 & 1 & 1 & 1 \\
\end{pmatrix}.
\end{equation}
\end{small}
For systems with $12$-sided regular polygon shape, we choose the $x$-axis toward the midpoint between two corners and the series of polynomials $f/g$ as:
\begin{small}
\begin{eqnarray}
\{&&(-6 x^5y + 20x^3 y^3 - 6 xy^5)/2L^6,\nonumber\\
&&-(2/9) (\sqrt{3} x^6 - 21 x^5 y - 15 \sqrt{3} x^4 y^2 +  6 x^3 y^3 + 15 \sqrt{3} x^2 y^4 - 21 x y^5 - \sqrt{3} y^6)/2L^6,\nonumber\\
&&1/3 (3 x^6 -  (8 + 4\sqrt{3}) x^5 y - (15 + 16\sqrt{3}) x^4 y^2 +16 x^3 y^3 + (-15 + 16 \sqrt{3}) x^2 y^4+  (-8 + 4\sqrt{3}) x y^5 + 3y^6)/2L^6,\nonumber\\
&&-(2/9) (2 \sqrt{3} x^6 - 3 x^5 y + 18 \sqrt{3} x^4 y^2 + 42 x^3 y^3 - 18 \sqrt{3} x^2 y^4 - 3 x y^5 - 2 \sqrt{3} y^6)/2L^6,\nonumber\\
&&((9 + \sqrt{3}) x^6 + (24 + 12\sqrt{3}) x^5 y -(45 + 63 \sqrt{3}) x^4 y^2 - 48 x^3 y^3 -  (45 - 63 \sqrt{3}) x^2 y^4 + (24 -12 \sqrt{3}) x y^5 + (9 - \sqrt{3}) y^6)/18L^6,\nonumber\\
&&-\sqrt{2} y ((-1 + 6 \sqrt{3}) x^4 +  2 (7 - 4 \sqrt{3}) x^2 y^2 + (-1 + 2 \sqrt{3}) y^4)/6L^5,\nonumber\\
&&\sqrt{2} ((1 + \sqrt{3}) x^5 + 3 (-2 + \sqrt{3}) x^4 y -4 (-1 + \sqrt{3}) x^3 y^2 + 2 (9 - 2 \sqrt{3}) x^2 y^3 + (-5 + 3 \sqrt{3}) x y^4 + \sqrt{3} y^5)/6L^5,\nonumber\\
&&-\sqrt{2} (x^5 - 6 \sqrt{3} x^4 y - 14 x^3 y^2 + 8 \sqrt{3} x^2 y^3 + x y^4 - 2 \sqrt{3} y^5)/6L^5,\nonumber\\
&&1/3 \sqrt{2} ((-1 + \sqrt{3}) x^5 + (4 + 3 \sqrt{3}) x^4 y - 4 (1 + \sqrt{3}) x^3 y^2 +  2 (5 - 2 \sqrt{3}) x^2 y^3 + (5 + 3 \sqrt{3}) x y^4 + (-2 + \sqrt{3}) y^5)/2L^5,\nonumber\\
&&\sqrt{2} ((-1 + \sqrt{3}) x^5 - (4 + 3 \sqrt{3}) x^4 y - 4 (1 + \sqrt{3}) x^3 y^2 + 2 (-5 + 2 \sqrt{3}) x^2 y^3 + (5 + 3 \sqrt{3}) xy^4 - (-2 +  \sqrt{3}) y^5)/6L^5,\nonumber\\
&&-(1/3) \sqrt{2} (x - \sqrt{3} y) (x^4 + 7 \sqrt{3} x^3 y + 7 x^2 y^2 - \sqrt{3} x y^3 - 2 y^4)/2L^5\},
\end{eqnarray}    
\end{small}
with
\begin{small}
\begin{equation}
\mathcal{M}=\frac{1}{2}\left(
\begin{array}{cccccccccccc}
 -1 & 1 & -1 & 1 & -1 & 1 & -1 & 1 & -1 & 1 & -1 & 1 \\
 1 & 1 & 1 & -1 & -1 & -1 & 1 & 1 & 1 & -1 & -1 & -1 \\
 -1 & -1 & 1 & 1 & -1 & 1 & -1 & -1 & 1 & 1 & -1 & 1 \\
 -1 & -1 & 1 & 1 & 1 & -1 & -1 & -1 & 1 & 1 & 1 & -1 \\
 1 & -1 & 1 & 1 & -1 & -1 & 1 & -1 & 1 & 1 & -1 & -1 \\
 -1 & -1 & -1 & -1 & -1 & -1 & 1 & 1 & 1 & 1 & 1 & 1 \\
 1 & 1 & 1 & 1 & 1 & -1 & -1 & -1 & -1 & -1 & -1 & 1 \\
 1 & 1 & 1 & 1 & -1 & 1 & -1 & -1 & -1 & -1 & 1 & -1 \\
 1 & 1 & 1 & -1 & 1 & 1 & -1 & -1 & -1 & 1 & -1 & -1 \\
 -1 & -1 & 1 & -1 & -1 & -1 & 1 & 1 & -1 & 1 & 1 & 1 \\
 -1 & 1 & -1 & -1 & -1 & -1 & 1 & -1 & 1 & 1 & 1 & 1 \\
 1 & 1 & 1 & 1 & 1 & 1 & 1 & 1 & 1 & 1 & 1 & 1 \\
\end{array}
\right).
\end{equation}
\end{small}

\section*{Supplemental Information, Note 5. Quadrupole moment $q_{xy}$ and multipole chiral number $N_{xy}$ in real space}
\addcontentsline{toc}{section}{Supplemental Information, Note 5. Quadrupole moment $q_{xy}$ and multipole chiral number $N_{xy}$ in real space}

\subsection{Definitions}
In this section, we provide the definitions of $q_{xy}$ and $N_{xy}$ used in our calculations of Fig.~2(a) of the main text.

The quadrupole moment defined in the real space is given by~\cite{TopologicalPhaseTransitions2020li,ManybodyElectricMultipole2019wheeler,ManybodyOrderParameters2019kanga}:
\begin{equation}
q_{x y}=\frac{1}{2 \pi} \operatorname{Im} \log \left[\operatorname{det}\left(V^{\dagger} \hat{Q} V\right) \sqrt{\operatorname{det}\left(\hat{Q}^{\dagger}\right)}\right] \quad \operatorname{mod} 1,
\end{equation}
where $V$ is a matrix composed of all the occupied eigenstates of $H$ under periodic boundary conditions. $\hat{Q}=e^{2\pi i \frac{XY}{L^2}}$.

Multipole chiral number, defined in 2D systems, is given by~\cite{ChiralSymmetricHigherOrderTopological2022benalcazar}:
\begin{equation}
    N_{xy}=\frac{1}{2\pi i}\operatorname{Tr}\operatorname{log}\left(V_A^{\dagger} \hat{Q}^{c} V_AV_B^{\dagger} \hat{Q}^{c \dagger} V_B\right),
\end{equation}
where $V_A$ and $V_B$ is determined by the singular value decomposition of $h$, $h=V_A\Sigma V_B^{\dagger}$, under the periodic boundary condition. $\hat{Q}^{c}=e^{2\pi i \frac{XY}{L^2}}$, with position operators represented in the subspace of chiral operator.

\subsection{The relationship between $q_{xy}$ and $N_{xy}$}
In this section, we prove the correspondence between $q_{xy}$ and $N_{xy}$ in chiral-symmetric systems,

\begin{equation}
\label{eqs:s_q_chiralrelationship}
    q_{xy}=\frac{N_{xy}}{2} \quad \operatorname{mod} 1.
\end{equation}

Since the following derivations do not contain the information of multipole operator, the conclusion can be generalized to higher multipole situations. 

By taking the eigenbasis of chiral operator, we rewrite $V$ in $q_{x y}$ as:
\begin{equation}
    V=\frac{1}{\sqrt{2}}\begin{pmatrix}
        V_A \\
        V_B
    \end{pmatrix},
\end{equation}
and rewrite $\hat{Q}$ in $q_{x y}$ as:
\begin{equation}
    \hat{Q}=\begin{pmatrix}
        \hat{Q}^{c} & 0 \\
        0 & \hat{Q}^{c}
    \end{pmatrix}.
\end{equation}

Therefore, we have:
\begin{equation}
\begin{aligned}
    q_{x y} &=\frac{1}{2 \pi} \operatorname{Im} \log \left[\operatorname{det}\left(V^{\dagger} \hat{Q} V\right) \sqrt{\operatorname{det}\left(\hat{Q}^{\dagger}\right)}\right]  \quad \operatorname{mod} 1 \\
    &=\frac{1}{2 \pi} \operatorname{Im} \log \left[\operatorname{det}\left(V_A^{\dagger} \hat{Q}^{c} V_A+V_B^{\dagger} \hat{Q}^{c} V_B\right) \sqrt{\operatorname{det}\left(\hat{Q}^{\dagger}\right)}\right] \quad \operatorname{mod} 1\\
    &=\frac{1}{2 \pi} \operatorname{Im} \log \left[\operatorname{det}\left(V_A^{\dagger} \hat{Q}^{c} V_A+\mathbbm{1}\right)\det \left(V_B^{\dagger} \hat{Q}^{c} V_B\right) \sqrt{\operatorname{det}\left(\hat{Q}^{\dagger}\right)}\right] \quad \operatorname{mod} 1\\
    &=\frac{1}{2 \pi} \operatorname{Im} \log \left[\operatorname{det}\left(V_A^{\dagger} \hat{Q}^{c} V_AV_B^{\dagger} \hat{Q}^{c\dagger} V_B+\mathbbm{1}\right)\right]  \quad \operatorname{mod} 1,
\end{aligned}
\end{equation}
where in the last step we use $\operatorname{det}\left(\hat{Q}^{c}\right) \sqrt{\operatorname{det}\left(\hat{Q}^{\dagger}\right)}=1$. 

Denoting the eigenvalues of $V_A^{\dagger} \hat{Q}^{c} V_AV_B^{\dagger} \hat{Q}^{c\dagger} V_B$ as $e^{i \lambda_i}$, we have 
\begin{equation}
    N_{xy}=\frac{1}{2\pi i}\operatorname{Tr}\operatorname{log}\left(V_A^{\dagger} \hat{Q}^{c} V_AV_B^{\dagger} \hat{Q}^{c\dagger} V_B\right)=\frac{\sum_{i}\lambda_i}{2 \pi}.
\end{equation}
We express the $q_{x y}$ as follows:
\begin{equation}
\begin{aligned}
        q_{x y} &=\frac{1}{2 \pi} \operatorname{Im} \log \left[\operatorname{det}\left(V_A^{\dagger} \hat{Q}^{c} V_AV_B^{\dagger} \hat{Q}^{c\dagger} V_B+\mathbbm{1}\right)\right]  \quad \operatorname{mod} 1\\
        &=\frac{1}{2 \pi} \operatorname{Im} \log \left[\prod_{i}\left(e^{i \lambda_i}+1\right)\right]  \quad \operatorname{mod} 1\\
        &=\frac{1}{2 \pi} \operatorname{Im}\sum_{i} \log \left(e^{i \lambda_i}+1\right)  \quad \operatorname{mod} 1\\
        &=\frac{1}{2 \pi} \operatorname{Im}\sum_{i} i \frac{\lambda_i}{2}  \quad \operatorname{mod} 1\\
        &=\frac{\sum_{i}\lambda_i}{4 \pi}  \quad \operatorname{mod} 1 \\
        &= \frac{N_{xy}}{2} \quad \operatorname{mod} 1.
\end{aligned}
\end{equation}

\subsection{Discussions}
Generally, both \( N_{xy} \) and \( q_{xy} \) are defined for systems with rectangular geometries. The structure of \( N_{xy} \) indicates a conceptual similarity to the Bott index \( \nu \) introduced in the main text, stemming from their shared foundation in K-theory, as both pertain to the AIII class. However, our approach diverges significantly from previous frameworks in both scope and impact. It critically examines the correspondences they establish, as highlighted in Table~\ref{tab:Compare}.

\begin{table}[]
    \centering
    \begin{tabular}{|p{4cm}|p{4cm}|p{7cm}|}
        \hline
        \multicolumn{1}{|c|}{\textbf{Scheme}} & 
        \multicolumn{1}{c|}{\textbf{Key Limitation}} & 
        \multicolumn{1}{c|}{\textbf{Evidence}} \\
        \hline
        Nested Wilson Loop \newline (Multipole Moment) 
        & Lacks full reliability and completeness. 
        & Recent works~\cite{PhysRevLett.132.213801,QuadrupoleInsulatorCorner2023tao,PhysRevResearch.2.043012,ChiralSymmetricHigherOrderTopological2022benalcazar}. \\
        \hline
        Polarization contributions \newline
        (e.g., $Q^{\mathrm{corner}}-p_{x}^{\mathrm{edge}}-p_{y}^{\mathrm{edge}}=-q_{xy}$) 
        & Lacks full reliability.
        & Type-II quadrupole topological insulators~\cite{TypeIIQuadrupoleTopological2020yang}. \\
        \hline
        Real-space Multipole Moment 
        & Lacks full reliability and completeness. 
        & Recent works~\cite{PhysRevLett.132.213801,QuadrupoleInsulatorCorner2023tao,PhysRevResearch.2.043012,ChiralSymmetricHigherOrderTopological2022benalcazar,DifficultiesOperatorbasedFormulation2019ono}. \\
        \hline
        Multipole Chiral Number 
        & Lacks full reliability and completeness. 
        & The first lattice model of main text (Eq. (30)) and recent works~\cite{PhysRevLett.132.213801,QuadrupoleInsulatorCorner2023tao} (utilizing Equation~\eqref{eqs:s_q_chiralrelationship}). \\
        \hline
    \end{tabular}
    \caption{Limitations of previous schemes and the corresponding supporting evidence.}
    \label{tab:Compare}
\end{table}

As shown by the definition of $N_{xy}$, it can be rewritten in the form of the Bott index,
\begin{equation}
N_{xy}=\operatorname{Bott}\left(\hat{Q}^{c},V_AV_B^{\dagger}\right).
\end{equation}
It should be noted that under periodic boundary conditions, the Bott index form of $N_{xy}$ may be ill-defined according to the Definition~M1 in the Method Section, with $\parallel [\hat{Q}^{c},V_AV_B^{\dagger}]\parallel$ equal to $2$. This is because the hopping, required by periodic boundary condition, is long-ranged, which renders Theorem~M2 in Methods inapplicable to Hamiltonians under such conditions. 
Let us illustrate this by considering 
$$e^{2\pi i \frac{x_my_m-x_ny_n}{L^2}},$$ 
which appears in the proof of Theorem~M2, Eq.~(40) when $f/g=XY/L^2$.
Choosing $x_m=1$, $x_n=L-R$, and $y_m=y_n=L/2$, 
we have 
$$e^{2\pi i \frac{x_my_m-x_ny_n}{L^2}}=-1,$$ 
when $R\ll L$. 
Thus, Eq.~(40) in the proof of Theorem~M2 is not satisfied. There is no upper bound decaying as $L$ increases for $\parallel [\hat{Q}^{c},V_AV_B^{\dagger}]\parallel$, leaving the possibility that $\parallel [\hat{Q}^{c},V_AV_B^{\dagger}]\parallel=2$.

\section*{Supplemental Information, Note 6. Bott index vector in systems with crystalline symmetries}
\addcontentsline{toc}{section}{Supplemental Information, Note 6. Bott index vector in systems with crystalline symmetries}

In this section, we discuss the constraints of certain crystalline symmetries on the Bott index vector.

First, we show that the co-existence of two ZECSs with the opposite chirality at the same corner is not topologically stable. Assuming that there are two ZECSs $|+\rangle$ and $|-\rangle$ with chirality $+$ and $-$ at one corner of a system. It is easy to verify that the perturbing term $\delta H=|+\rangle\langle-|+|-\rangle\langle+|$ gaps the two ZECSs out while keeping the chiral symmetry~\cite{SecondorderTopologicalInsulators2018geier}. Thus, such co-existence is not topologically stable.

\subsection*{The four-fold rotation symmetry}
Now, we argue that the $C_4$ rotation operator must anticommute with the chiral operator for a system to exhibit nontrivial HOTPs. If we assume that the four-fold rotation operator commutes with the chiral operator, we have four ZECSs ${\left(R_4\right)^i\Psi}_{corner}$ [$i=0,1,2,3$] located at each of the four corners of the square, where $\Psi_{corner}$ is a wavefunction of a ZECS located at one corner. All four of these ZECSs have the same chirality. Since $\sum_{i=1}^{4}\chi_i=0$, there must be a ZECS with the opposite chirality located at each corner of the square. However, as mentioned above, the coexistence of two ZECSs with opposite chiralities at the same corner is not topologically stable. Thus, the four-fold rotation operator must anticommute with the chiral operator if the system possesses non-trivial higher-order topological phases.

In the following, we prove that $\nu_x=\nu_y=0$ when an $C_4$ rotational symmetry is present.
First, we have
\begin{eqnarray}
    R_4 H\left(x,y\right)R_4^\dagger=H\left(-y,x\right), R_4=\begin{pmatrix} 0& R_4^A \\ R_4^B & 0 \end{pmatrix},
\end{eqnarray}
where the rotation operator is written in the eigenbasis of the chiral operator. It follows that
\begin{eqnarray}
R_4^BU_A&=\tilde{U}_B, \\
R_4^AU_B&=\tilde{U}_A,
\end{eqnarray}
where $\tilde{\cdot}$ denotes matrices and Bott indices obtained in the system rotated by $90$ degrees. Thus, we have
\begin{equation}
\begin{aligned}
\tilde{\nu}_x &=\operatorname{Bott}(e^{2\pi i \frac{-Y}{L}},q)={-\nu}_y \\
\tilde{\nu}_x &=\frac{1}{2\pi i}\mathrm{Tr\ Log}\left(U_B^\dagger\left(R_4^A\right)^\dagger e^{2\pi i\frac{X}{L}}R_4^AU_BU_A^\dagger\left(R_4^B\right)^\dagger e^{-2\pi i\frac{X}{L}}R_4^BU_A\right) \\
&=\frac{1}{2\pi i}\mathrm{Tr\ Log}\left(U_B^\dagger e^{2\pi i\frac{X}{L}}U_BU_A^\dagger e^{-2\pi i\frac{X}{L}}U_A\right) ={-\nu}_x.    
\end{aligned}
\end{equation}
Similarly, for $\nu_y$ we have
\begin{equation}
\begin{aligned}
\tilde{\nu}_y&=\operatorname{Bott}(e^{2\pi i \frac{X}{L}},q)=\nu_x \\
\tilde{\nu}_y&=\frac{1}{2\pi i}\mathrm{Tr\ Log}\left(U_B^\dagger\left(R_4^A\right)^\dagger e^{2\pi i\frac{Y}{L}}R_4^AU_BU_A^\dagger\left(R_4^B\right)^\dagger e^{-2\pi i\frac{Y}{L}}R_4^BU_A\right) \\&=\frac{1}{2\pi i}\mathrm{Tr\ Log}\left(U_B^\dagger e^{2\pi i\frac{Y}{L}}U_BU_A^\dagger e^{-2\pi i\frac{Y}{L}}U_A\right)={-\nu}_y,
\end{aligned}
\end{equation}
and for $\nu_{2xy}$ we have
\begin{equation}
\begin{aligned}
\tilde{\nu}_{2xy}&=\operatorname{Bott}(e^{2\pi i \frac{-2XY}{L^2}},q)={-\nu}_{2xy}\\
\tilde{\nu}_{2xy}&=\frac{1}{2\pi i}\mathrm{Tr\ Log}\left(U_B^\dagger\left(R_4^A\right)^\dagger e^{2\pi i\frac{2XY}{L^2}}R_4^AU_BU_A^\dagger\left(R_4^B\right)^\dagger e^{-2\pi i\frac{2XY}{L^2}}R_4^BU_A\right)\\& =\frac{1}{2\pi i}\mathrm{Tr\ Log}\left(U_B^\dagger e^{2\pi i\frac{2XY}{L^2}}U_BU_A^\dagger e^{-2\pi i\frac{2XY}{L^2}}U_A\right)={-\nu}_{2xy}.
\end{aligned}
\end{equation}
From the above results, we have $\nu_x=\nu_y=-\nu_x$. Thus, $\nu_x$ and $\nu_y$ must be equal to $0$. This result corresponds to the configuration vector of corner states in an $C_4$ rotation-symmetric system. With the equation
\begin{equation}
    \v{\chi}_4=\mathcal{M}^{-1}\cdot\left(\nu_{2xy},0,0,0\right),
\end{equation}
we obtain that the configuration vector equals $\v{\chi}_4=\left(N_1,-N_1,N_1,-N_1\right)^T$, which corresponds to the $C_4$ symmetry.

\section*{Supplemental Information, Note 7. More lattice models}
\addcontentsline{toc}{section}{Supplemental Information, Note 7. More lattice models}

In this section, we introduce two systems, one with a pentagon shape and the other with a cubic shape, to demonstrate our framework for arbitrary shapes and dimensions.
First, we consider a system shaped as a regular pentagon, with the corresponding Bloch Hamiltonian $H_{penta}$, expressed as follows:
\begin{equation}
    h_{penta}(\v{k})=\left( \begin{matrix}
	-t_x-t_x^{\prime} e^{-i k_x}&		t_y+t_y^{\prime} e^{i k_y}\\
	t_y+t_y^{\prime} e^{-i k_y}&		t_x+t_x^{\prime} e^{i k_x}\\
\end{matrix} \right)+2w_d\left( \begin{matrix}
	 e^{-i k_x}\cos k_y&		- e^{i k_y}\cos k_x\\
	- e^{-i k_y}\cos k_x&		- e^{i k_x}\cos k_y\\
\end{matrix} \right).   
\end{equation}
This Bloch Hamiltonian is originally introduced in Ref.~\cite{ChiralSymmetricHigherOrderTopological2022benalcazar} and defined on rectangle lattices. However, by cutting the boundary of this Hamiltonian, we obtain a system with a regular pentagon shape under open boundary conditions. We characterize the pattern of ZECSs in this system by calculating the Bott index vector in real space, using the following $f/g$ obtained in the previous section.
\begin{equation}
    \begin{aligned}
   \{&X^4-4 \sqrt{1+\frac{2}{\sqrt{5}}} X^3 Y-2 X^2 Y^2+\frac{4}{5} \sqrt{25+2
   \sqrt{5}} X Y^3+\frac{Y^4}{5},\\&X^4+4 \sqrt{1-\frac{2}{\sqrt{5}}} X^3 Y-2 X^2
   Y^2+\frac{4}{5} \sqrt{25-2 \sqrt{5}} X Y^3+\frac{Y^4}{5},\\&X^4-\frac{2}{5}
   \left(5+6 \sqrt{5}\right) X^2 Y^2+\frac{1}{5} \left(1+4 \sqrt{5}\right)
   Y^4,\\&X^4+\left(\frac{12}{\sqrt{5}}-2\right) X^2 Y^2+\frac{1}{5} \left(1-4
   \sqrt{5}\right) Y^4\}/2(\frac{L}{2\sin \pi/5})^4.
\end{aligned}
\end{equation}
The calculations show that the Bott index vector equals $(0,2,0,0)$, and the configuration vector $\v{\chi}_5$ equals $(0,1,-1,1,-1)^{T}$ for $t_x=t_y=1$, $t_x^{\prime}=t_y^{\prime}=3$, and $w_d=1$, as illustrated in the Fig.~\ref{fig:Sup-Fig1_penta_3d}. They follows the exact relationship
\begin{equation}
     \boldsymbol{\chi}_{5}=\mathcal{M}^{-1}\cdot \left(\nu^{(1)}_5,\dots,\nu^{(4)}_5,0\right)^{\mathrm{T}},
\end{equation}
with $\mathcal{M}$ provided in the previous section.

\begin{figure}
\centering
	\includegraphics[width=1.0\columnwidth]{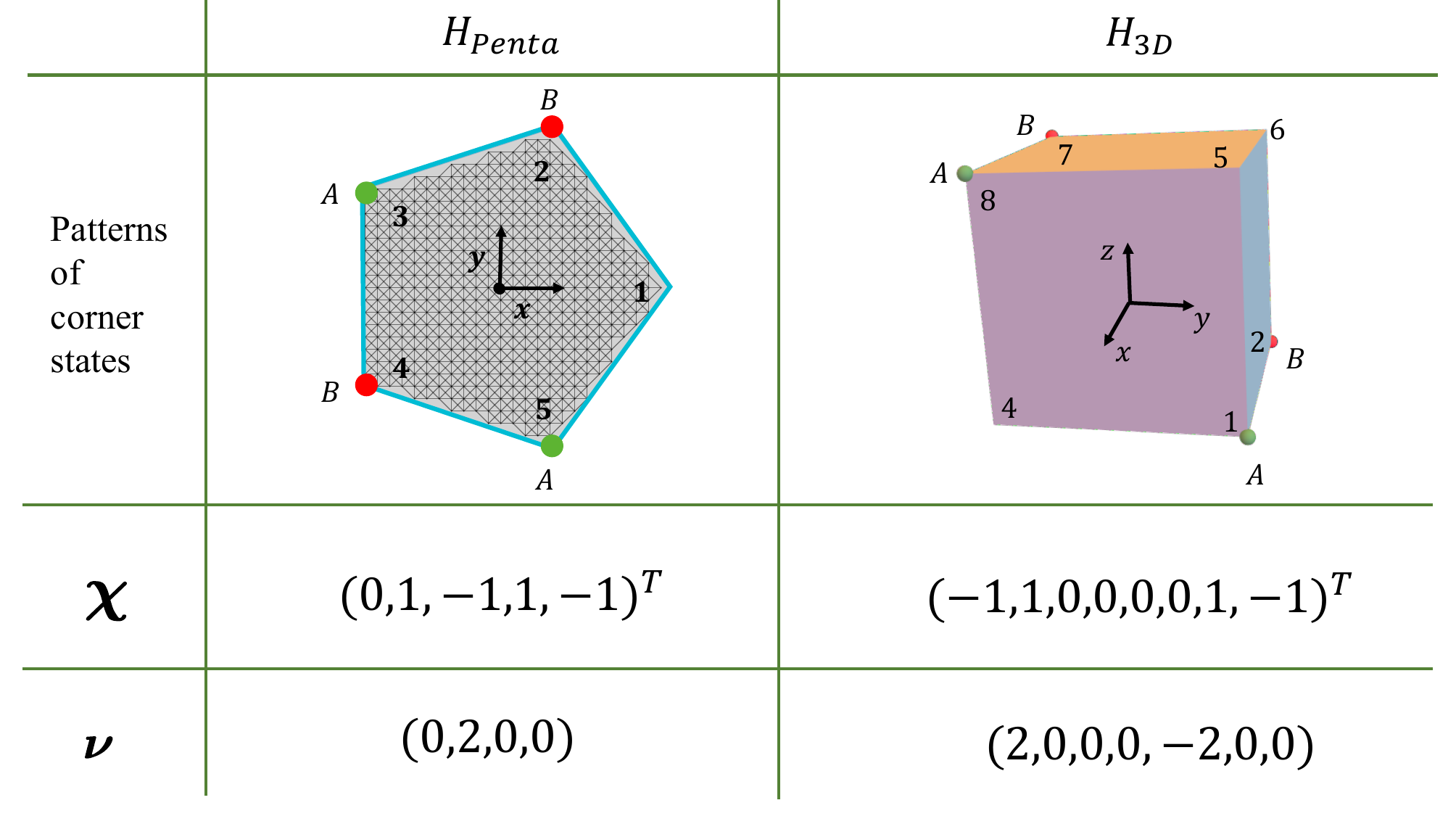}
	\caption {The two systems are shaped as a regular pentagon and a cube, respectively. The Bott index vector $\v{\nu}$ and configuration vectors $\v{\chi}$ of these systems adhere to the analytical relationship outlined in the main text. For $H_{penta}$, the system size $L$ is $30$, and for $H_{3D}$, it is $15$.}
	\label{fig:Sup-Fig1_penta_3d}
\end{figure}

Second, we examine a model with a cubic shape.
\begin{equation}
\begin{aligned}
        h_{3D}(\v{k})=&(m_x-i\gamma_x \sin k_x) \mathbbm{1}+ (t_z+\gamma_z \cos k_z)\tau_x\otimes\tau_0-\gamma_z\sin k_z \tau_y\otimes\tau_0\\&-(t_y-\gamma_y\cos k_y)\tau_z\otimes\tau_x+\gamma_y \sin k_y \tau_x\otimes \tau_y + (t_x+\gamma_x\cos k_x) \tau_z \otimes \tau_z.
\end{aligned}
\end{equation}
All possible patterns of HOTPs in this system can be captured by the Bott index vector defined by the following $f/g$,
\begin{equation}
    \{\frac{4XYZ}{L^3},\frac{2XY}{L^2},\frac{2XZ}{L^2},\frac{2YZ}{L^2},\frac{X}{L},\frac{Y}{L},\frac{Z}{L}\}
\end{equation}
with 
\begin{equation}
    \mathcal{M}=\frac{1}{2}\begin{pmatrix}
     -1 & 1 & -1 & 1 & 1 & -1 & 1 & -1 \\
 1 & -1 & 1 & -1 & 1 & -1 & 1 & -1 \\
 -1 & 1 & 1 & -1 & 1 & -1 & -1 & 1 \\
 -1 & -1 & 1 & 1 & 1 & 1 & -1 & -1 \\
 1 & -1 & -1 & 1 & 1 & -1 & -1 & 1 \\
 1 & 1 & -1 & -1 & 1 & 1 & -1 & -1 \\
 -1 & -1 & -1 & -1 & 1 & 1 & 1 & 1 \\
 1 & 1 & 1 & 1 & 1 & 1 & 1 & 1 \\
    \end{pmatrix}.
\end{equation}
For example, when $m_x=0.7$, $t_x=0.1$, $\gamma_x=0.8$, $t_y=t_z=0$, $\gamma_y=\gamma_z=1$, this system exhibits a HOTP with the configuration vector $\v{\chi}_8=(-1,1,0,0,0,0,1,-1)$. As shown in Fig.~\ref{fig:Sup-Fig1_penta_3d}, the Bott index vector $\v{\nu}_8$ equals $(2,0,0,0, -2,0,0)$, demonstrating that HOTPs can be fully characterized by
\begin{equation}
     \boldsymbol{\chi}_{8}=\mathcal{M}^{-1}\cdot \left(\nu^{(1)}_8,\dots,\nu^{(7)}_8,0\right)^{\mathrm{T}}.
\end{equation}

\bigskip

\def\bibsection{\ } 
\noindent \textbf{REFERENCES}

\bibliographystyle{naturemag}
\bibliography{ref}
\end{document}